\newtheorem{theorem}{Theorem}[section]
\newtheorem{lemma}[theorem]{Lemma}
\newtheorem{conjecture}[theorem]{Conjecture}
\def\QED{\ensuremath{{\square}}}
\newcommand{\col}[1]{#1_\mathrm{col}}
\def\R{\mathop{I\!\!R}\nolimits}
\begin{document}


\title{On the number of radial orderings of planar point sets}

\author{J.M.~D\'{i}az-B\'{a}\~{n}ez}
\address{\newline 
Departamento Matem\'{a}tica Aplicada II. \newline
 Universidad de Sevilla \newline 
Seville, Spain}
\email{dbanez@us.es}

\author{R.~Fabila-Monroy}
\address{\newline Departamento de
  Matem\'aticas \newline Centro de Investigación y de Estudios Avanzados del Instituto Politécnico Nacional.
 \newline Mexico City,
  Mexico} \email{ruyfabila@math.cinvestav.edu.mx}

\author{P.~P\'{e}rez-Lantero}
\address{\newline Escuela de Ingenier\'ia Civil en Inform\'atica.\newline
 Universidad de Valpara\'iso \newline Valparaiso, Chile} \email{pablo.perez@uv.cl}

\thanks{J.M.D.-B. is suppported by project FEDER MEC MTM2009-08652 and ESF EUROCORES programme EuroGIGA - ComPoSe IP04 - MICINN Project EUI-EURC-2011-4306.\\ P.P.-L. is partially supported by project FEDER MEC MTM2009-08652 and grant FONDECYT 11110069.\\ R.F.-M partially supported by Conacyt of Mexico, grant 153984.}
\date{\today}

\maketitle

\begin{abstract}
Given a set $S$ of $n$ points in the plane, a  \emph{radial ordering} of $S$ with respect to a point $p$ (not in $S$) is a clockwise circular ordering of the elements in $S$ by angle around $p$. If $S$ is two-colored, a  \emph{colored radial ordering} is a radial ordering of $S$ in which only the colors of the points are considered. In this paper, we obtain bounds on the number of distinct non-colored and colored radial orderings of $S$. We assume a strong general position on $S$, not three points are collinear and not three lines---each passing through a pair of points in $S$---intersect in a point of $\R^2\setminus S$. In the colored case, $S$ is a set of $2n$ points partitioned into $n$ red and $n$ blue points, and $n$ is even. We prove that: the number of distinct radial orderings of $S$ is at most $O(n^4)$ and at least $\Omega(n^3)$; the number of colored radial orderings of $S$ is at most $O(n^4)$ and at least $\Omega(n)$; there exist sets of points with $\Theta(n^4)$ colored radial orderings and sets of points with only $O(n^2)$ colored radial orderings.
\end{abstract}

{\bf Keywords:} radial orderings, colored point sets, star polygonizations

\section{Introduction}

Let $S$ be a set of $n$  points in the plane. 
We say that $S$ is in \emph{strong general position} if it is in \emph{general position}
(not three of its points are collinear) and every
time that three lines---each passing through a pair of points in $S$---intersect, they
do so in a point in $S$. 
Unless otherwise noted, all point sets in this paper are in strong general position.
Let $p$ be  a point not in $S$ such
that $S \cup \{p\}$ is in general position; we call $p$ an \emph{observation point}.
A \emph{radial ordering} of $S$ with respect to $p$ is 
a clockwise circular ordering of the points in $S$ by their angle around $p$.
Thus these orderings are equivalent under rotations.
If every point in $S$ is assigned one of two colors, say red and blue, then
 a \emph{colored radial ordering} of $S$ with respect to $p$ is 
a circular clockwise ordering of the colors of the points in $S$ by their angle around $p$.
Thus permutations between points
of the same color yield the same colored radial ordering.

Let $\rho(S)$ be the number of distinct radial orderings of $S$
with respect to every observation point in the plane. Likewise,
let $\col{\rho}(S)$ be the  number of distinct colored radial orderings of $S$
with respect to every observation point in the plane. We define the following functions:

\begin{align*}
    f(n)&:=\max\{\rho(S): S \text{ is a set of } n \text{ points}\}  \\
    \col{f}(n)&:=\max\{\col{\rho}(S): n \text{ is even and } S \text{ is a set of } n \text{ red and } n \text{ blue points}\} \nonumber \\
    g(n)&:=\min\{\rho(S): S \text{ is a set of } n \text{ points}\}  \\
    \col{g}(n)&:=\min\{\col{\rho}(S): n \text{ is even and } S \text{ is a set of } n \text{ red and } n \text{ blue points}\} \\
\end{align*}

In this paper we prove the following bounds.

\begin{equation*}
  \begin{array}{ccccc}
    &&f(n)&=&\Theta(n^4) \\
    &&\col{f}(n)&=&\Theta(n^4)\\

    \Omega(n^{3}) &\le &g(n)&\le&O(n^4)\\
    \Omega(n) &\le & \col{g}(n)& \le& O(n^2)\\
  
 \end{array}
\end{equation*}

The first equality ($f(n)=\Theta(n^4)$) has been noted
before in the literature. In \cite{ferran,poly,robots_original,numberof} $f(n)=O(n^4)$ is
proved. In \cite{numberof} the author proves both $f(n)=O(n^4)$
and $f(n)=\Omega(n^4)$. As far as we know,
all the other bounds are new.

A different problem but in the same setting
has been studied recently in \cite{robots_original}. In that paper, the authors
study what  a robot can infer from its environment
when all the information that is available
is the cyclic positions of some landmarks
as they appear from the robot's position.
Other authors  have considered problems of the same flavor,
when a similar kind of information is available.
See for example \cite{counting,minimalist,simple}. In \cite{ferran} the authors
study the algorithmic problem of updating the radial ordering of
a moving observation point.

We point out that computing the radial ordering of $S$
around every point \emph{in} $S$ is an unavoidable
step in some geometric algorithms, as for example, performing a radial sweeping of a point
set. Moreover, many optimization problems are solved by considering the arrangement generated
by every line passing through every pair of points in $S$, and finding the optimum point inside each of the $O(n^4)$ cells in the arrangement \cite{optimization}. In many cases this is because  the radial ordering of the points in $S$ around every point within a cell is the same. It could be interesting in this scenario to know how many cells induce the same radial ordering.

For a bi-colored point set, a radial sweeping algorithm also requires the ordering as an initial step, so it could be useful to know bounds on the number of different colored radial ordering of $S$ from points in the plane. From the combinatorial point of view, this problem is related to partitioning  bi-colored point sets with \emph{$k$-fans} \cite{4-fan,k-fans}. A $k$-fan in the plane is a point $p$ (called the center) and $k$ rays emanating from $p$.
This structure can be used to partition $S$  into $k$ monochromatic subsets and it depends
only on the colored radial ordering of $S$ with respect to $p$. The existence of balanced---each
part having an equal number of red an blue points---$k$-fans 
for colored point sets has been studied in recent papers \cite{bereg05,bereg00} 
but, as far as we know, the number of different monochromatic 
partitions induced by $k$-fans has not yet been considered. 

The assumptions that $S$ is in strong general position;
that $S$ has the same number
of red and blue points and that $n$ is even, may
seem arbitrary. However, the three of them are crucial
hypothesis in our results (see Section \ref{sec:con}).

A preliminary version
of this paper appeared in \cite{ECG2011}.

\section{Uncolored Case}\label{sec:uncolored}

\begin{figure}[t]
  \begin{center}
    \includegraphics[width=0.4\textwidth]{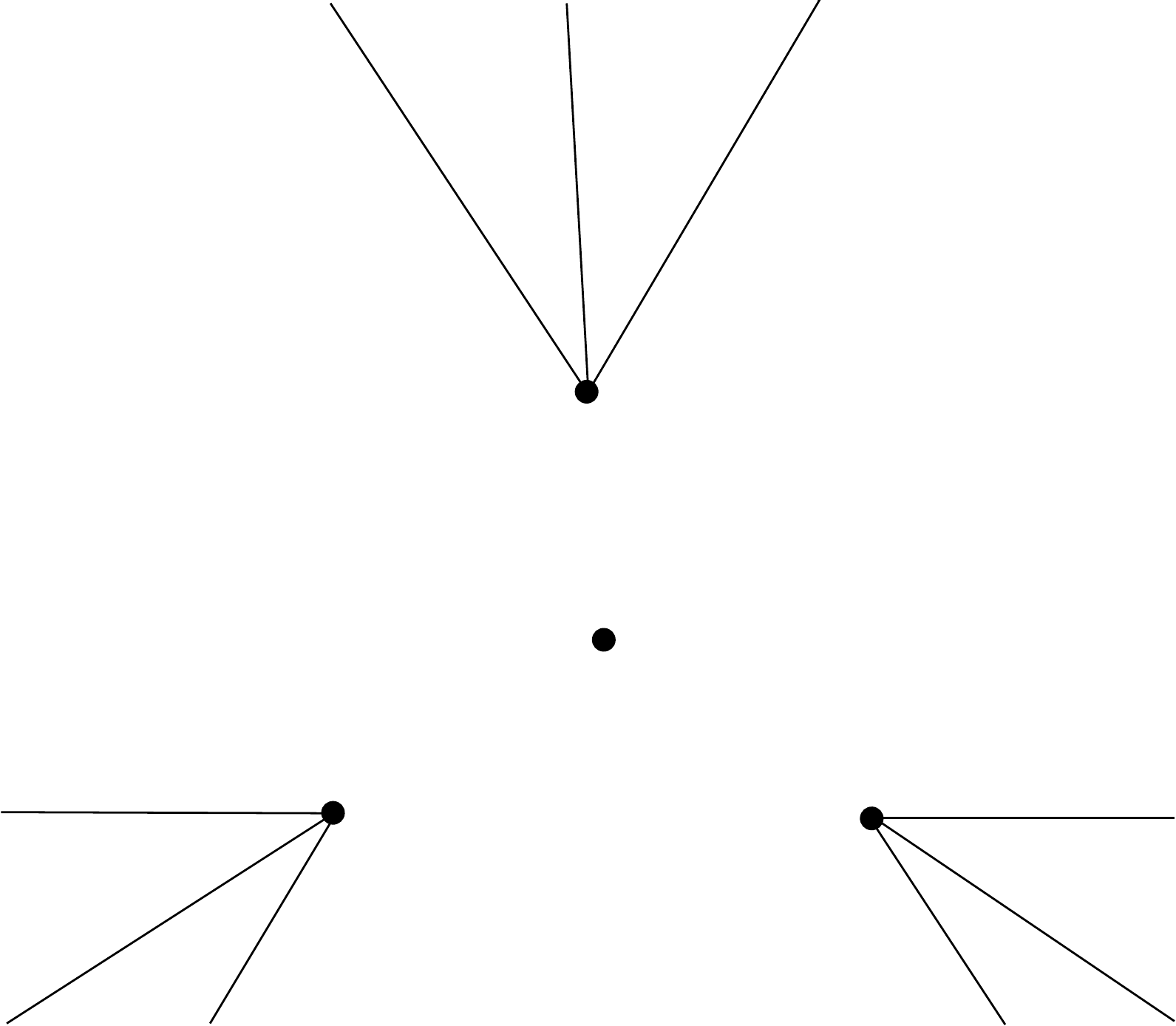}
  \end{center}
    \caption{The order partition of a set of four points.}\label{fig:order_partition}
\end{figure}

In this section $S$ is a set of $n$ points
in strong general position in the plane.
We discretize the problem by partitioning
the set of observation points into a finite 
number of sets so that points in the same set
induce the same radial ordering.
This partition  is made by half-lines, which
if crossed by an observation point, generate a transposition of two
consecutive elements in
the radial ordering. For every
pair of points $x_1,x_2 \in S$, consider the line passing through them.
Contained in this line we have two half-lines;
one  begins in $x_1$ and does not contain
$x_2$, while the other begins in $x_2$ and
does not contain $x_1$. Two observation points are in the same element of the partition if they can
be connected by a path which does not intersect any half-line. 
We call this partition the \emph{order partition} (see Figure \ref{fig:order_partition}). 
Since it induces a decomposition of the plane,
we refer to its elements as \emph{cells}. The order partition
is used (under different names) also in \cite{ferran,poly,robots_original}.
Note that if a point moves in a path not
crossing any half-line, the radial ordering with respect to this point
is the same throughout the motion. Thus points in the same cell induce the 
same radial ordering. As a set of three points already shows,
the converse is not true in general; two observation points may lie in different
cells of the order partition and induce the same radial ordering of $S$.

As mentioned before, the following two bounds on $f(n)$ have been
proved before; we provide proofs 
for completeness.

\begin{theorem} \label{thm:f_upper}
$f(n) \le O(n^4)$
\end{theorem}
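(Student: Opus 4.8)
The plan is to count the cells of the order partition and show there are $O(n^4)$ of them; since any two observation points lying in the same cell induce the same radial ordering of $S$, this immediately gives $\rho(S) = O(n^4)$ for every $n$-point set $S$ in strong general position, and hence $f(n) = O(n^4)$. Recall that the order partition is generated by the $n(n-1)$ half-lines obtained from the $\binom{n}{2}$ lines through pairs of points of $S$, each line contributing the two half-lines into which it is split by the pair of points it passes through.

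The key observation is that the order partition is a coarsening of the arrangement $\mathcal{A}$ of the $\binom{n}{2}$ full lines. Indeed, each half-line bounding the order partition lies on a line of $\mathcal{A}$, so any region that avoids all the lines of $\mathcal{A}$ also avoids all the half-lines; therefore every face of $\mathcal{A}$ is contained in a single cell of the order partition, and the number of cells of the order partition is at most the number of faces of $\mathcal{A}$. A planar arrangement of $m$ lines has at most $\binom{m}{2} + m + 1 = O(m^2)$ faces; taking $m = \binom{n}{2} = O(n^2)$ yields $O(n^4)$, which completes the argument.

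There is essentially no real obstacle here; the only point meriting a little care is the coarsening claim, i.e.\ that shortening each line to its two half-lines can only merge faces and never create new ones. If one prefers to avoid even this remark, one can instead invoke directly the standard fact that an arrangement of $m$ rays (each contained in a line) has $O(m^2)$ faces, and apply it to the $m = n(n-1)$ half-lines. Note also that strong general position is not needed in this direction: it is only general position (no three points collinear) that is used, and only to guarantee that there are exactly $\binom{n}{2}$ distinct lines.
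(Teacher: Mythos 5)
Your proof is correct and follows essentially the same route as the paper: the paper also bounds the number of cells of the order partition by the number of faces of the arrangement of the $\binom{n}{2}$ lines, which is $O(n^4)$. Your added justification of the coarsening step (each half-line lies on a line of the arrangement, so the order partition can only merge faces) is a harmless elaboration of the same argument.
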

\begin{proof}
The order partition cannot have more cells than the 
arrangement induced by the lines passing through each pair of points in $S$.
Such an arrangement has $O(n^4)$ cells.
\end{proof}

\begin{theorem}\label{thm:f_lower}
$f(n)\ge \Omega(n^4)$
\end{theorem}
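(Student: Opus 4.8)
The plan is to exhibit an explicit configuration of $n$ points whose order partition has $\Omega(n^4)$ cells with pairwise distinct radial orderings. The natural candidate is a point set placed in ``convex position but locally clustered'': take two groups of roughly $n/2$ points each, one group $A$ on a tiny arc near the point $(-1,0)$ and one group $B$ on a tiny arc near $(1,0)$, with each group spread along a short segment of very small slope. The idea is that the lines spanned by a pair of points, one from $A$ and one from $B$, form a grid-like family of $\Theta(n^2)$ nearly-horizontal lines in the strip between the two clusters, and two more families of $\Theta(n^2)$ lines come from pairs within $A$ and within $B$. First I would set up coordinates carefully so that strong general position holds and so that the $\Theta(n^2)$ ``long'' lines between $A$ and $B$ cross each other in $\Theta(n^4)$ distinct points inside a bounded region that is itself reachable by observation points.

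The key steps, in order: (1) Fix the explicit coordinates for $A$ and $B$ and verify strong general position (no three points collinear, and triple intersections of spanned lines occur only at points of $S$) — this can be arranged by choosing the coordinates generically, e.g. on a moment-type curve or by a small perturbation argument. (2) Show that the $\Theta(n^2)$ lines determined by pairs $(a,b)$ with $a\in A$, $b\in B$ are in ``general enough'' position that they pairwise cross in $\Theta(n^4)$ distinct points, all lying in a common bounded region $R$ of the plane located, say, above both clusters. (3) Argue that as an observation point $p$ moves around within $R$, crossing one of the associated half-lines genuinely transposes two consecutive elements of the radial ordering, and that one can route a path through $\Omega(n^4)$ distinct cells of the order partition realizing $\Omega(n^4)$ distinct radial orderings — for instance by showing each such crossing changes the relative radial order of a specific pair $\{a,b\}$, so that distinct cells in $R$ are distinguished by the set of pairs in ``$A$-before-$B$'' order. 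Combined with Theorem~\ref{thm:f_upper} this yields $f(n)=\Theta(n^4)$.

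The main obstacle I expect is step (3): having many cells in the arrangement of half-lines is not automatically the same as having many \emph{distinct radial orderings}, since, as the text already warns with three points, different cells can induce the same ordering. So the real content is to certify that the $\Theta(n^4)$ cells inside $R$ carry pairwise distinct radial orderings. The cleanest way I would handle this is to encode the radial ordering of $p\in R$ by, for each pair $(a,b)\in A\times B$, which of $a,b$ comes first in clockwise order around $p$; crossing the half-line through $a$ and $b$ flips exactly this bit and, crucially, within the bounded region $R$ no other line through $a$ and $b$ interferes, so the ``type'' (the bit vector) changes at every wall crossing and no type repeats along a monotone path through the grid of $\Theta(n^4)$ intersection points. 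Making ``monotone path visiting $\Omega(n^4)$ cells'' precise — i.e. that the arrangement of $\Theta(n^2)$ roughly-parallel-looking lines really does have $\Theta(n^4)$ vertices all in one face-connected region traversable without exiting $R$ — is the remaining technical point, and it is where the explicit near-horizontal coordinates of $A$ and $B$ do the work, since two families of lines with slopes in two well-separated narrow ranges behave like a genuine $\Theta(n)\times\Theta(n)$ grid.

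I would also remark that an alternative, possibly shorter, route is to cite that the arrangement of the $\binom{n}{2}$ lines through pairs of points of a suitable $S$ has $\Theta(n^4)$ \emph{vertices}, and to choose $S$ so that distinct vertices lie in distinct cells with distinct orderings; but the half-line/bit-vector argument above is the one I would write out in full, as it directly controls radial orderings rather than mere cell count.
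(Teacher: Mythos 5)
Your plan has the right shape (an explicit two-cluster construction, a grid of $\Theta(n^4)$ arrangement vertices, and a combinatorial certificate that distinct cells give distinct orderings), and you correctly identify that the hard part is distinctness rather than cell-counting. But there is a genuine geometric gap in steps (2)--(3). The order partition is cut not by the full lines through pairs of $S$ but only by the two \emph{half-lines} of each such line (the parts beyond each of the two spanning points); crossing the middle segment does not change the radial ordering. For a cross pair $a\in A$, $b\in B$ with $A$ near $(-1,0)$ and $B$ near $(1,0)$, the line $\ell_{ab}$ is nearly horizontal and its two half-lines run off to the far left of $A$ and the far right of $B$, hugging the $x$-axis; they never enter a bounded region ``above both clusters.'' Likewise, any bounded intersection point of two such nearly horizontal lines lies in a thin slab around the $x$-axis, not above the clusters. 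So the region $R$ you describe is simply not subdivided by these half-lines, and the $\Theta(n^4)$ cells you want to walk through are not there. Relatedly, your certificate bit ``which of $a,b$ comes first in clockwise order around $p$'' is not well-defined for a cyclic order of two elements; the well-defined quantity is the orientation of $(p,a,b)$, but that flips when $p$ crosses the \emph{segment} $ab$, where the radial ordering does not change, so the bit vector is not an invariant of the radial ordering as stated.

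The paper proves this theorem by the one-line reduction $f(n)\ge \col{f}(n)$ together with the colored construction of Theorem~\ref{thm:fcol_lower}, and that construction shows how to repair your idea: the relevant lines there are spanned by two points in the \emph{same} cluster lying on two circles of very different radii ($\varepsilon$ and $1/4$) around the cluster center, so each line is roughly radial and its half-line extension fans out into a third, far-away disk $B_3$ of observation points. Two such fans, one per cluster, cross transversally inside $B_3$ in a genuine $m^2\times m^2$ grid, and every wall of that grid is an actual half-line, so each crossing really transposes two adjacent points. Distinctness is then certified by a monotonicity argument along straight walks (the number of points of one cluster-layer following the $k$-th point of the other layer changes monotonically), which is the rigorous version of your bit-vector idea. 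If you restructure your clusters this way (two radii per cluster, rays aimed at a common observation region), your program goes through; as written, the construction does not produce the claimed cells, let alone distinct orderings.
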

\begin{proof}
Follows from $f(n)\ge \col{f}(n)$ and Theorem \ref{thm:fcol_lower}
\end{proof}

We now prove an upper and a lower bound on $g(n)$. The upper bound
follows from the upper bound on $f(n)$ 
\begin{theorem}
$g(n) \le O(n^4)$
\end{theorem}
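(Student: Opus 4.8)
The plan is to derive this bound directly from the upper bound on $f(n)$ already established. First I would note that, by the very definitions given in the introduction, $g(n)=\min\{\rho(S)\}$ and $f(n)=\max\{\rho(S)\}$, with both extrema taken over the same family of $n$-point sets in strong general position; hence trivially $g(n)\le f(n)$. Theorem~\ref{thm:f_upper} gives $f(n)\le O(n^4)$, and therefore $g(n)\le O(n^4)$.

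If one prefers an argument that does not route through $f$, the proof of Theorem~\ref{thm:f_upper} applies verbatim to a single configuration: for \emph{every} set $S$ of $n$ points, the order partition refines the arrangement of the $\binom{n}{2}$ lines spanned by pairs of points of $S$, this arrangement has $O(n^4)$ cells, and all observation points lying in one cell induce the same radial ordering of $S$. Thus $\rho(S)=O(n^4)$ for every $S$, and in particular $g(n)=\min_S\rho(S)=O(n^4)$.

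There is no real obstacle here; this inequality is essentially immediate, and I would expect the paper to state it in a line or two. The substantive content concerning $g$ lies entirely in the matching lower bound $g(n)\ge\Omega(n^3)$, which requires exhibiting, for each $n$, a point configuration that is "poor" in radial orderings and then bounding $\rho(S)$ from below for that configuration; that is the step where work is needed, and it is treated separately.
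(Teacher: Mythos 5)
Your proposal is correct and matches the paper exactly: the paper derives $g(n)\le O(n^4)$ in one line from $g(n)\le f(n)$ and Theorem~\ref{thm:f_upper}, which is precisely your first argument. The alternative direct argument via the order partition is also sound but unnecessary.
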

\begin{flushright} \QED \end{flushright}

The lower bound on $g(n)$ is far more elaborate. First
we show a lower bound of $\Omega(n^4)$ on the
number of cells in the order partition. As far as we know,
this is the first lower bound ever given on the size
of the order partition.

\begin{theorem}\label{thm:part_size}
The number of cells in the order partition of $S$ is $\Omega(n^4)$.
\end{theorem}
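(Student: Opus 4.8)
The plan is to count the cells of the order partition directly with Euler's formula, after embedding the half-lines in the full line arrangement they live in and carefully tracking which intersection points survive. For each pair $x_1,x_2\in S$ the line through them is split by $x_1,x_2$ into the open segment $x_1x_2$ and two half-lines, so the order partition is exactly the arrangement of all $\binom{n}{2}$ lines with the $\binom{n}{2}$ open segments deleted. I would compactify by adding a single point at infinity: every half-line becomes an arc from a point of $S$ to $\infty$, each point of $S$ is the common endpoint of the $n-1$ half-lines it emits (so there are no dangling ends), and the only remaining vertices are crossings of two half-lines. Writing $I_{rr}$ for the number of such crossings, Euler's formula on the sphere gives $n+I_{rr}+1$ vertices and $2\binom{n}{2}+2I_{rr}$ edges, hence exactly $(n-1)^2+I_{rr}$ faces. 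Thus it suffices to prove $I_{rr}=\Omega(n^4)$.

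The next step is to evaluate $I_{rr}$ by a local analysis over $4$-point subsets. By strong general position, every crossing off $S$ lies on exactly two lines and corresponds to a unique splitting of a $4$-subset into two disjoint pairs, so each $4$-subset $\{a,b,c,d\}$ contributes its three pairwise line-crossings, which I classify according to whether each crossing falls on a segment or on a half-line of each of the two lines. A direct case check shows that a $4$-subset in convex position contributes one crossing on the segments of both lines (its two diagonals) and two crossings on the half-lines of both lines (its two pairs of opposite sides, extended), whereas a $4$-subset with one point inside the triangle of the other three contributes three crossings, each lying on the segment of one line and the half-line of the other. Hence only convex $4$-subsets produce surviving vertices, and each produces exactly two, yielding the clean identity $I_{rr}=2C$, where $C$ is the number of $4$-subsets of $S$ in convex position (equivalently, the number of crossings in the straight-line drawing of $K_n$ on $S$).

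Finally I would lower-bound $C$ for an \emph{arbitrary} $S$, which is where the universality of the statement is secured. Here I invoke the Erd\H{o}s--Szekeres \emph{happy ending} fact that every $5$ points in general position contain a convex $4$-subset; since each convex $4$-subset extends to exactly $n-4$ of the $\binom{n}{5}$ five-point subsets, double counting gives $C(n-4)\ge\binom{n}{5}$, i.e. $C\ge\binom{n}{5}/(n-4)=\tfrac15\binom{n}{4}$. Combining the three steps, the number of cells of the order partition equals $(n-1)^2+2C\ge\tfrac25\binom{n}{4}=\Omega(n^4)$ for every $S$ in strong general position.

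The delicate point, and the reason the bound is not immediate, is that deleting the open segments could a priori merge a constant fraction, or even almost all, of the $\Theta(n^4)$ cells of the full line arrangement; indeed, for configurations with many non-convex quadruples the individual crossings really do get destroyed. The argument goes through only because the surviving crossings are counted \emph{exactly} by the convex quadruples, and because Erd\H{o}s--Szekeres forces $\Omega(n^4)$ convex quadruples in \emph{every} configuration. I expect the main work to be the case analysis establishing $I_{rr}=2C$ together with the Euler bookkeeping (handling the point at infinity and checking that no half-line leaves a dangling end), rather than the concluding counting step.
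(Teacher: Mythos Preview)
Your argument is correct and takes a genuinely different route from the paper's. The paper works subtractively: it first counts the cells of the full line arrangement $\mathcal{A}$ as $\tfrac{n^4}{8}-O(n^3)$ via Euler's formula, then deletes the $\binom{n}{2}$ open segments one by one, bounding the number of merged cells by the number $M$ of arrangement vertices lying on those segments; it evaluates $M=3\binom{n}{4}-2\operatorname{cr}(S)$ and invokes the rectilinear crossing-number lower bound $\operatorname{cr}(S)\ge\tfrac{3}{8}\binom{n}{4}$ to conclude that at least $\tfrac{n^4}{32}-O(n^3)$ cells survive. You instead apply Euler's formula directly to the half-line arrangement, obtaining the exact identity $(\text{cells})=(n-1)^2+I_{rr}$ with $I_{rr}=2\operatorname{cr}(S)$, and then lower-bound $\operatorname{cr}(S)$ by the elementary Erd\H{o}s--Szekeres argument $\operatorname{cr}(S)\ge\tfrac{1}{5}\binom{n}{4}$. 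What your approach buys is a clean closed formula for the number of cells and a fully self-contained proof; what the paper's approach buys is a sharper constant (since it uses the stronger bound $\tfrac{3}{8}\binom{n}{4}$), though plugging that same bound into your identity would recover exactly the paper's $\tfrac{n^4}{32}$.

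One small caveat: strong general position does not exclude parallel lines (the paper's own proof explicitly allows them), so your identity $I_{rr}=2C$ is off by the number of pairs of parallel ``opposite sides'' among convex $4$-tuples. This is harmless for the asymptotics, since for any three points $p,q,r$ there is at most one $s$ with $rs\parallel pq$ by general position, giving at most $O(n^3)$ such parallel pairs; hence $I_{rr}=2C-O(n^3)$ and the $\Omega(n^4)$ conclusion stands. You may want to insert this one-line fix when you write the case analysis out in full.
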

\begin{proof}
Let $\mathcal{L}$ be the set of lines passing through each pair 
of points in $S$. First we show that in the line arrangement $\mathcal{A}$
generated by $\mathcal{L}$ there are $\frac{n^4}{8}-O(n^3)$ cells.
We may regard $\mathcal{A}$ as a plane graph $G$ with an extra vertex $v^*$, such 
that $v^*$ is contained in all the unbounded faces of $G$ and in all the lines in 
$\mathcal{L}$. 
Let $V$, $E$ and $F$ be the number of vertices, edges and faces of $G$ respectively.
Note that by our strong general assumption $G$ has exactly:
one vertex of degree $n(n-1)$; $n$ vertices of degree $n-1$
and $V-n-1$ vertices of degree $4$. Thus $E$ equals
$n(n-1)/2+n(n-1)/2+2V-2n-2$ which is $2V+O(n^2)$.
By Euler's formula $F$ equals $V+O(n^2)$. Let $p$ and 
$q$ be any two points in $S$. Let $L_p$ be the set of lines
in $\mathcal{L}$ that contain $p$ but not $q$. Let  $L_q$ be
the set of lines in $\mathcal{L}$ that contain $q$ but not $p$.
Note that each line in  $L_p$ intersects each line in $L_q$ in
a point not in $S$, with the exception of only two cases.
The case in which both lines are parallel or the case
in which they both contain the same point in $S\setminus \{p,q\}$. 
In total each case occurs at most $(n-2)$ times; 
thus $p$ and $q$ induce at least $(n-2)(n-2)-2(n-2)=(n-2)(n-4)$ vertices of degree $4$.
Doing this for every pair of vertices, we count each
degree $4$ vertex exactly four times. In total there are
$\frac{n^4}{8}-O(n^3)$ vertices and $\frac{n^4}{8}-O(n^3)$ faces
in $G$ (and the same number of cells in $\mathcal{A}$). The order partition can be obtained
by removing from $\mathcal{A}$ each line segment joining  a pair
of points of $S$. Let $e_1,\dots,e_{\binom{n}{2}}$ be these
line segments in any given order.
Let $\operatorname{cr}(S)$ be the number of pairs of these edges
that intersect in their interior. Let $M$ be the number 
of vertices of $\mathcal{A}$ lying in the interior of any of the $e_i$'s. Note that for 
each set of four points of $S$ we obtain: one vertex
of $M$ if the set is in convex position and three vertices
otherwise. The number of sets of four elements of $S$
that are in convex position is precisely $\operatorname{cr}(S)$.
Thus $M=3\binom{n}{4}-2\operatorname{cr}(S)$. It is known
that $\operatorname{cr}(S)$ is bounded from below by $\frac{3}{8}\binom{n}{4}$
(see \cite{bernardo,gelasio,lovasz}). Thus $M$ is at most 
$\frac{9}{4}\binom{n}{4}=\frac{3}{32}n^4-O(n^3)$.
We remove each $e_i$ in order, and show that at the end 
$\Omega(n^4)$ cells remain. Let $d_i$ be the number
of vertices in $M$ lying in the interior of
$e_i$ just before it is removed. Note that when $e_i$
is removed, in the worst case, $d_i+1$ cells 
of $\mathcal{A}$ are lost.
Thus when all of the $e_i$'s  are removed at least
$\frac{n^4}{8}-O(n^3)-\sum_{i=1}^{\binom{n}{2}}(d_i+1)=\frac{n^4}{8}-O(n^3)-M-\binom{n}{2}\ge\frac{n^4}{32}-O(n^3)$
cells remain.
\end{proof}

We now prove a useful lemma for finding distinct radial orderings
of $S$.

\begin{lemma}\label{lem:sep} (Partition Lemma)
Let $(R,B)$ be a partition of $S$. Let $p$ and $q$ 
be two points in different cells of the order partition.
 If no half-line spanned by a point in $R$ and point in $B$ intersects the line segment
with endpoints $p$ and $q$, then the radial
orderings of $S$ as seen from $p$ and $q$ are distinct.
\end{lemma}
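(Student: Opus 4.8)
The plan is to track what happens to the radial ordering as an observation point moves from $p$ to $q$ along the segment $\overline{pq}$, and to argue that every "event" that could undo a color-pattern change is excluded by the hypothesis. First I would recall the mechanism behind the order partition: as the observation point crosses a half-line of the order partition — say the half-line emanating from $x_1 \in S$ away from $x_2 \in S$ — the only change in the radial ordering is that $x_1$ and $x_2$ swap in consecutive positions (they are collinear with the observation point at the instant of crossing, with $x_1$ between the observer and $x_2$). Since $p$ and $q$ lie in different cells, the segment $\overline{pq}$ crosses at least one such half-line, so the radial ordering genuinely changes somewhere along the way; the issue is only whether it changes \emph{back} by the time we reach $q$.

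The key observation is to pass from the full radial ordering to the \emph{colored} radial ordering induced by the partition $(R,B)$. A transposition of two consecutive points $x_1, x_2$ in the radial ordering changes the colored radial ordering if and only if $x_1$ and $x_2$ have different colors, i.e. one lies in $R$ and the other in $B$ — and such a swap is triggered exactly by crossing a half-line spanned by a point of $R$ and a point of $B$. By hypothesis, no such half-line meets $\overline{pq}$. Therefore, as the observation point traverses $\overline{pq}$, the colored radial ordering (with respect to $(R,B)$) never changes: it is the same at $p$ as at $q$. Wait — that is the opposite of what we want, so the argument must be run the other way: I would instead argue that the full radial orderings at $p$ and $q$ \emph{must differ}, and derive a contradiction from assuming they are equal, using the crossing that does occur.

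Concretely: suppose for contradiction that the radial orderings of $S$ as seen from $p$ and from $q$ coincide as \emph{cyclic} sequences. Parametrize $\overline{pq}$ and let $\ell_1, \dots, \ell_k$ be the half-lines of the order partition crossed, in order, at times $0 < t_1 < \cdots < t_k < 1$ (by strong general position we may assume these crossings are transversal, distinct, and not at vertices). Each crossing applies a transposition $\tau_j$ of two consecutive elements to the current radial ordering, and the composition $\tau_k \cdots \tau_1$ must be the identity permutation of the cyclic word (since the endpoints have the same radial ordering) — but $k \ge 1$, and one checks that a single such transposition is never trivial, while more generally the net effect cannot be the identity unless the crossings "cancel" in a way that I would rule out. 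The cleaner route: project onto colors. Applying the color map, each $\tau_j$ that swaps two differently-colored points changes the colored radial ordering; by hypothesis \emph{no} $\ell_j$ is $R$–$B$ spanned, so every $\tau_j$ swaps two same-colored points and the colored radial ordering is constant along $\overline{pq}$. Hence the colored radial orderings at $p$ and $q$ agree; but then, since the \emph{uncolored} orderings at $p$ and $q$ are assumed equal and $p,q$ lie in distinct cells, the crossings $\tau_1,\dots,\tau_k$ compose to a nontrivial rearrangement that is invisible to the color map — i.e. it permutes only within color classes while genuinely changing the uncolored cyclic order. I would then derive a contradiction by a parity or minimality argument: take $p,q$ realizing the shortest such segment, note $k\ge 2$ (a single transposition changes the uncolored order), and show the first crossing $\tau_1$ must be "re-crossed" by some later $\tau_j$ with the same $\{x_1,x_2\}$, which forces the observation point to lie on the same half-line twice — impossible for a straight segment not containing $p$ or $q$ on that line. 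The main obstacle is making this last cancellation argument airtight: ruling out that a \emph{composition} of several within-color transpositions can fix every monochromatic block's boundary yet alter the order inside — this is where one really uses that the same pair $(x_1,x_2)$ can only be swapped an even number of times along a line segment, together with strong general position to keep the crossing structure clean.
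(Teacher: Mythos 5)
Your setup is right---every half-line crossed by the segment $\overline{pq}$ is spanned by two points of the same part, the segment meets the line through any pair at most once, and at least one half-line is crossed---but the argument collapses exactly where you flag it. The claim you need, that a composition of adjacent transpositions returning a \emph{cyclic} ordering to itself must transpose some pair an even (hence, here, zero) number of times, so that ``re-crossing the same half-line'' yields the contradiction, is false for cyclic orderings. Already with three points: $[a,b,c]\to[b,a,c]\to[b,c,a]=[a,b,c]$ as cyclic orderings, using the two \emph{distinct} pairs $\{a,b\}$ and $\{a,c\}$ once each; this is realizable by a straight segment passing behind the vertex $a$ of the triangle, and it is precisely the three-point example the paper itself cites to show that distinct cells of the order partition can induce the same radial ordering. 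So no parity-of-each-pair or minimality argument on the whole permutation can close the gap; the hypothesis on bichromatic half-lines must enter the combinatorial step, not merely rule out changes of the colored ordering.

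The missing idea is to localize to a witness triple rather than analyze the whole permutation. Take one half-line crossed by $\overline{pq}$, spanned by $x_1,x_2\in S$; by hypothesis $x_1$ and $x_2$ lie in the same part, say $R$, and you pick any $x_3$ in the other part $B$. The cyclic order of $\{x_1,x_2,x_3\}$ seen from the moving point flips exactly when a half-line spanned by a pair from this triple is crossed. The pairs $\{x_1,x_3\}$ and $\{x_2,x_3\}$ are bichromatic, so by hypothesis their half-lines never meet $\overline{pq}$; the half-lines of the pair $\{x_1,x_2\}$ are crossed exactly once, since a segment meets the line through $x_1,x_2$ in at most one point. Hence the cyclic order of the triple flips exactly once between $p$ and $q$, so the induced orderings of $\{x_1,x_2,x_3\}$, and therefore the full radial orderings of $S$, differ. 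This is the paper's proof; your observation that all crossed half-lines are monochromatic is half of it, but without the opposite-colored witness $x_3$ the conclusion does not follow.
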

\begin{proof}
Since $p$ and $q$ lie in different
cells of the order partition, the line segment
joining them must intersect at least one half-line. Let
$x_1,x_2 \in S$ be the points
defining this half-line. Note that $x_1$ and $x_2$ 
are both in $R$ or both in $B$. Let $x_3 \in S$ be a point
in the element of the partition not containing $x_1$ and $x_2$.
Assume without loss of generality that
the radial ordering of $\{x_1, x_2, x_3\}$
with respect to $p$  is $[x_1, x_2, x_3]$.
Since  the half-line is crossed only once,
the radial ordering of $\{x_1,x_2,x_3\}$ with
respect to $q$ is $[x_2, x_1, x_3]$.
Therefore, the radial ordering of $S$ with respect to $p$ is different
from the radial ordering with respect to $q$.
\end{proof}

The following Lemma is also proved
in \cite{poly}(Theorem 4.2) and in \cite{robots_original}(Theorem~1), we include a proof for completeness.

\begin{lemma}\label{lem:convex_interior}
Let $p$ and $q$ be two observation points in the interior
of the convex hull of $S$, lying in different cells of the order
partition. Then the radial orderings of $S$ with respect to
$p$ and $q$ are distinct.
\end{lemma}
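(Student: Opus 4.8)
The plan is to reduce this statement to the Partition Lemma (Lemma~\ref{lem:sep}) by exhibiting a suitable bipartition $(R,B)$ of $S$ depending on $p$ and $q$. Since $p$ and $q$ lie in different cells of the order partition, the segment $pq$ crosses at least one half-line; let $x_1, x_2 \in S$ span the first such half-line met when travelling from $p$ to $q$, say the half-line emanating from $x_1$ away from $x_2$. The key geometric observation is that because $p, q$ are both interior to the convex hull of $S$, the point $x_2$ lies, roughly speaking, ``on the far side'' of this half-line's supporting line relative to the cone-like region cut off near $x_1$; more precisely, the full line through $x_1$ and $x_2$ separates $S \setminus \{x_1, x_2\}$ into two parts, and I claim one can choose $R$ and $B$ so that $x_1, x_2$ end up in the same class while no $R$--$B$ half-line meets $pq$.

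Concretely, first I would set up the configuration: let $\ell$ be the line through $x_1$ and $x_2$, and note $p$ and $q$ lie on opposite sides of the half-line $h$ from $x_1$, but since the crossing is the first one, the open segment from $p$ up to the crossing point stays within a single cell. The natural candidate is to let $R$ be the points of $S$ on one side of $\ell$ together with $\{x_1, x_2\}$, and $B$ the points on the other side — but one must verify no half-line spanned by a point of $R$ and a point of $B$ crosses $pq$. Here I would use that $pq$ crosses only the half-lines associated to the $\binom{n}{2}$ pairs in a very restricted way near its endpoints (because $p, q$ are deep inside the hull, each half-line ray, having an endpoint at a vertex, points ``outward'' in a controlled sense). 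Alternatively — and this is cleaner — I would argue by contradiction: suppose the radial orderings agree; then for \emph{every} bipartition $(R,B)$, some $R$--$B$ half-line meets $pq$, and in particular taking the bipartition induced by $\ell$ forces a contradiction with $x_1, x_2$ being on the same side.

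The main obstacle I anticipate is pinning down precisely why interiority of $p$ and $q$ guarantees the existence of such a bipartition — equivalently, why the half-line crossed by $pq$ can always be ``paired'' with an appropriate coloring. The cleanest route is probably: take $x_1, x_2$ defining the crossed half-line $h$ (emanating from $x_1$, not containing $x_2$), let $\ell$ be their line, and observe that since $p$ is interior to $\mathrm{conv}(S)$ and $p$ lies on the $x_2$-side of $x_1$ along $h$... actually the segment $pq$ crosses $h$, so $p$ and $q$ are on opposite sides of $x_1$ as seen along rays, but both on the same side of $\ell$ or straddling it; in any case, color $x_1$ and $x_2$ red and all other points blue. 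Then the only half-lines that could meet $pq$ and be bichromatic are those from $x_1$ or $x_2$ to some other point; using interiority, no such half-line separates $p$ from $q$ except possibly $h$ itself — but $h$ is monochromatic (spanned by $x_1, x_2$, both red). Hence by Lemma~\ref{lem:sep} with this $(R,B)$, the orderings differ, contradiction. I would need to check carefully that with this two-element red class, interiority indeed rules out every other bichromatic half-line crossing $pq$; that verification is the crux and will use that any half-line from $x_1$ or $x_2$ that crosses the interior segment $pq$ would force one of $p, q$ outside $\mathrm{conv}(S)$.
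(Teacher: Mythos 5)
Your reduction to the Partition Lemma is the right framework, but the bipartition you settle on does not work, and the verification you flag as ``the crux'' is exactly where the argument breaks. With $R=\{x_1,x_2\}$ and $B=S\setminus\{x_1,x_2\}$, a bichromatic half-line is any ray emanating from some $y\in B$ in the direction away from $x_1$ (or from $x_1$ away from $y$, etc.). Such rays routinely pass through the interior of the convex hull of $S$ --- for instance, if $y$ is not a vertex of the hull, the ray from $y$ away from $x_1$ enters the interior of the hull immediately --- so interiority of $p$ and $q$ in no way forces one of them outside $\mathrm{conv}(S)$ when such a ray crosses the segment $pq$. If your claim were true, the segment $pq$ could essentially cross only the single half-line $h$, whereas adjacent interior cells of the order partition are in general separated from farther cells by many half-lines spanned by many different pairs. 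Your alternative candidate (splitting $S$ by the line through $x_1$ and $x_2$, keeping $x_1,x_2$ on one side) fails for the same reason: nothing prevents a half-line spanned by a pair straddling that line from meeting $pq$. The contradiction you sketch at the end of the second paragraph is likewise not derived from anything.

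The missing idea is to partition $S$ by the line $\ell$ through $p$ and $q$ themselves; this is the paper's one-line proof. Interiority is used only to guarantee that both open half-planes bounded by $\ell$ contain points of $S$, so that both classes are nonempty (the Partition Lemma needs a point $x_3$ in the class not containing the crossed pair). The non-crossing condition then holds for free: if $u$ and $v$ lie on opposite sides of $\ell$, the point where the line $uv$ meets $\ell$ lies strictly between $u$ and $v$, hence on neither of the two half-lines they span; so no bichromatic half-line meets $\ell$ at all, let alone the segment $pq\subseteq\ell$. The key structural difference from your attempt is that this bipartition depends on $p$ and $q$ rather than on the crossed half-line, which is what makes the verification immediate.
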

\begin{proof}
Let $\ell$ be the straight line containing $p$ and $q$.
Since $p$ and $q$ are in the interior of the convex hull of $S$,
$\ell$ partitions $S$ into two sets that together with $p$
and $q$ satisfy the conditions
of the Partition Lemma.
\end{proof}

\begin{figure}[t]
 \begin{center}
  \includegraphics[width=0.4\textwidth]{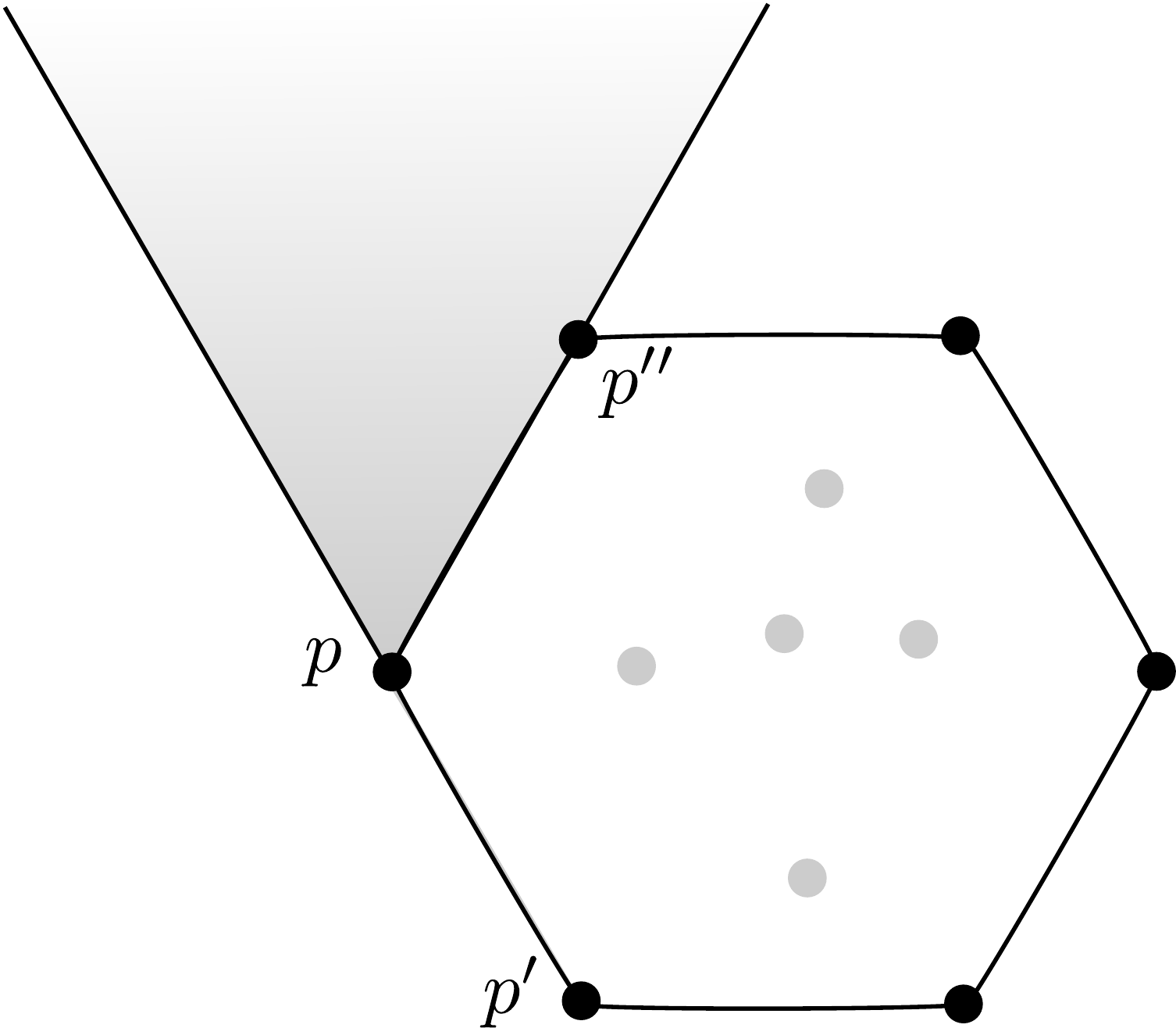}
  \end{center}
   \caption{The cone in the proof of Theorem \ref{thm:g_lower}.}\label{fig:cone}
\end{figure}

Finally, we combine the above results to prove
the lower bound on $g(n)$.

\begin{theorem}\label{thm:g_lower}
$g(n) \ge \Omega(n^{3})$
\end{theorem}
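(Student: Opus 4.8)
The plan is to combine the $\Omega(n^4)$ lower bound on the number of cells of the order partition (Theorem~\ref{thm:part_size}) with the Partition Lemma and Lemma~\ref{lem:convex_interior}, at a cost of only one factor of $n$. Call a cell of the order partition \emph{interior} if it lies in the interior of the convex hull of $S$, and \emph{exterior} if it lies in the exterior; every cell that is neither meets the boundary of the convex hull, which consists of at most $n$ segments, each meeting $O(n^2)$ cells, so only $O(n^3)$ cells are of this third type. Hence either $\Omega(n^3)$ cells are interior, or $\Omega(n^4)$ cells are exterior. In the first case Lemma~\ref{lem:convex_interior} immediately gives $\rho(S)\ge\Omega(n^3)$, so from now on I assume that $\Omega(n^4)$ cells are exterior.

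For exterior cells I would work with a cone anchored at a vertex $v$ of the convex hull (Figure~\ref{fig:cone}), applying the Partition Lemma with the very unbalanced partition $R=\{v\}$, $B=S\setminus\{v\}$. Since $v$ is a hull vertex, $S\setminus\{v\}$ lies inside a wedge $W$ of angular width less than $\pi$ with apex $v$. The only half-lines spanned by a point of $R$ and a point of $B$ are the $n-1$ rays issuing from $v$ in the directions opposite to the points of $B$, which all lie inside the vertically opposite wedge $-W$, together with the $n-1$ rays issuing from the points of $B$ away from $v$, which never enter $-W$. The first family cuts $-W$ into at most $n$ angular sectors, and each such open sector $C$ is a convex region disjoint from every bichromatic half-line; by the Partition Lemma, any two observation points of $C$ lying in different cells of the order partition induce distinct radial orderings, so $\rho(S)$ is at least the number of cells meeting $C$. (One might hope to use instead a slab between two consecutive points of $S$ in some direction, which is likewise disjoint from the corresponding bichromatic half-lines; but such a slab misses the two unbounded regions beyond the convex hull, where all exterior cells may concentrate, whereas a cone anchored at a hull vertex does reach into them.)

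The remaining and main task is to exhibit a hull vertex $v$ and a sector $C$ of $-W_v$ that together meet $\Omega(n^3)$ cells, and this counting is the real obstacle. A crude pigeonhole — over the $O(n)$ hull vertices and then over the $O(n)$ sectors of the chosen cone, hence over $O(n^2)$ cone–sectors — only guarantees $\Omega(n^2)$ cells in the best one, short by a factor of $n$. To recover it one has to exploit that the $\Omega(n^4)$ exterior cells are far from uniformly distributed: a cell lying far outside the convex hull sits behind a constant fraction of all hull vertices at once, and the cells inside a single backward cone are forced to accumulate in a bounded number of its sectors. Turning this into a rigorous choice of $v$ and $C$ — and separately handling the exterior cells that sit directly beyond a hull edge rather than behind a hull vertex — is the crux of the proof; granting it, the Partition Lemma argument of the previous paragraph gives $\rho(S)=\Omega(n^3)$, and since $S$ was an arbitrary point set in strong general position, $g(n)\ge\Omega(n^3)$.
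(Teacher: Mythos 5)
Your overall strategy matches the paper's: split cells into interior and exterior, handle the interior ones with Theorem~\ref{thm:part_size} and Lemma~\ref{lem:convex_interior}, and attack the exterior ones with the Partition Lemma applied to $R=\{v\}$, $B=S\setminus\{v\}$ for a hull vertex $v$. But the step you yourself flag as ``the crux'' is a genuine gap, and it is created by your choice of cone. You place the observation points in the \emph{backward} wedge $-W_v$, which is exactly where the $n-1$ half-lines emanating from $v$ (away from the points of $B$) live; this forces you to subdivide into $n$ sectors, so a pigeonhole over the $O(n^2)$ pairs (vertex, sector) only yields $\Omega(n^2)$, and your hoped-for rescue (each far-away cell lies behind a constant fraction of the hull vertices) fails for the many exterior cells near the hull boundary and is not carried out. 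On top of that, the backward wedges do not cover the exterior of the hull (the slabs directly beyond each edge are missed), which you also acknowledge but do not resolve.

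The fix is to choose a different cone, and this is what the paper does. For consecutive hull vertices $p',p,p''$ in clockwise order, let $\Gamma_p$ be the cone with apex $p$ bounded by the ray from $p$ through $p''$ and the extension of the edge $p'p$ beyond $p$. With $R=\{p\}$, $B=S\setminus\{p\}$, \emph{no} bichromatic half-line meets the interior of $\Gamma_p$: the half-lines issuing from $p$ away from a point $b\in B$ lie in the wedge at $p$ vertically opposite to the hull wedge spanned by $B$, which meets $\Gamma_p$ only along the boundary ray opposite to $p'$; and the half-lines issuing from $b$ away from $p$ lie, as seen from $p$, in the direction of $b$, hence inside the hull wedge. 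So $\Gamma_p$ needs no subdivision, and the $m\le n$ cones $\Gamma_p$ tile the entire exterior of the hull (the bounding rays ``extension of $p'p$ beyond $p$'', one per vertex, cut the exterior into exactly these cones). A single pigeonhole over the $\le n$ cones then places $\Omega(n^4/n)=\Omega(n^3)$ exterior cells in one cone, and the Partition Lemma finishes the argument. Without replacing your cone by one of this kind, your proof does not close.
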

\begin{proof}
A cell of the order partition is \emph{interior} if it intersects
the interior of the convex hull of $S$ and it is
\emph{exterior} if it has a point not in the interior of
the convex hull of $S$. Note that a cell can be both interior and exterior.
If less than half of the the cells are exterior
then at least half of them are interior and 
we are done by Theorem \ref{thm:part_size} and 
Lemma \ref{lem:convex_interior}. 
Assume then,  that at least half of the cells are exterior. Thus
there are $\Omega(n^4)$ exterior cells.

Let $C$ be the convex hull of $S$ and $m$ be its
number of vertices.  Let $p$ be one of these vertices. Let $p'$ and $p''$ be the vertices
previous and next to $p$ in $C$
in clockwise order. Let $\Gamma_p$ be the convex cone
with apex $p$,
bounded by: the infinite ray with apex $p$ and 
passing through $p''$ and the infinite
ray with apex $p'$ and passing through $p$ (see Figure \ref{fig:cone}).
Let $R:=\{p\}$ and $B:=S \setminus \{p\}$.
Note that any two points in $\Gamma_p$
lying in different cells of the order partition,
together with $R$ and $B$, satisfy the conditions
of the Partition Lemma. Therefore, the radial orderings
of $S$ with respect to any two points in $\Gamma_p$  lying in different
cells of the order partition are distinct.
For each vertex of $C$ define
such a cone. Every exterior cell intersects one 
of these cones. Therefore there is a cone
intersecting $\Omega(n^4/m)=\Omega(n^{3})$
of them and the result follows. 
\end{proof}

\section{Colored Case} \label{sec:colored}

In this section $n$ is even, and $S$ is a set of $n$ red and $n$ blue points
in strong general position in the plane. We prove
upper and lower bounds on $\col{f}(n)$ and $\col{g}(n)$.

\begin{theorem}\label{thm:fcol_upper}
$\col{f}(n) \le O(n^4)$
\end{theorem}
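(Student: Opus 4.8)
The plan is to reduce the colored upper bound to the uncolored one in the only way that the setup allows. The key observation is that the colored radial ordering of $S$ with respect to an observation point $p$ is completely determined by the (uncolored) radial ordering of $S$ with respect to $p$, since we obtain the former from the latter simply by replacing each point by its color. Hence two observation points lying in the same cell of the order partition induce the same colored radial ordering, and therefore
\[
\col{\rho}(S) \le \rho(S) \le f(n) = O(n^4),
\]
where the last bound is Theorem \ref{thm:f_upper}. Since this holds for every valid bi-colored $S$, we conclude $\col{f}(n) \le O(n^4)$.

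First I would state explicitly the monotonicity fact $\col{\rho}(S) \le \rho(S)$: the map sending a radial ordering of $S$ to its induced colored radial ordering is a well-defined surjection from the set of radial orderings of $S$ (over all observation points) onto the set of colored radial orderings of $S$ (over all observation points), because every colored radial ordering arises from at least one actual radial ordering — namely the one at whatever observation point realized it. Second, I would invoke Theorem \ref{thm:f_upper}, which bounds $\rho(S)$ by the number of cells of the order partition, which in turn is at most the number of cells of the line arrangement through all pairs of points of $S$, namely $O(n^4)$. Chaining these gives the claim.

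This is essentially a one-line argument, so there is no real obstacle; the only thing to be careful about is making sure the colored-from-uncolored reduction is stated cleanly, i.e. that the coloring is fixed and the quantities $\rho(S)$ and $\col{\rho}(S)$ range over the same family of observation points. One could alternatively argue directly that crossing a single half-line transposes two consecutive elements of the radial ordering and hence changes the colored radial ordering by at most a transposition of two consecutive colors, so the number of colored orderings is again bounded by the number of cells of the order partition; but the reduction via $f(n)$ is shorter and suffices. I expect the author's proof to be exactly this short reduction.
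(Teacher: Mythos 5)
Your proposal is correct and is exactly the paper's argument: the paper proves this theorem by noting $\col{f}(n) \le f(n)$ and invoking Theorem \ref{thm:f_upper}. The only cosmetic point is that a bi-colored $S$ here has $2n$ points, so strictly one should write $\col{\rho}(S)\le\rho(S)\le f(2n)=O(n^4)$, which changes nothing asymptotically.
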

\begin{proof}
Follows from $\col{f}(n) \le f(n)$ and Theorem \ref{thm:f_upper}
\end{proof}

\begin{figure}[t]
  \begin{center}
    \includegraphics[width=0.9\textwidth]{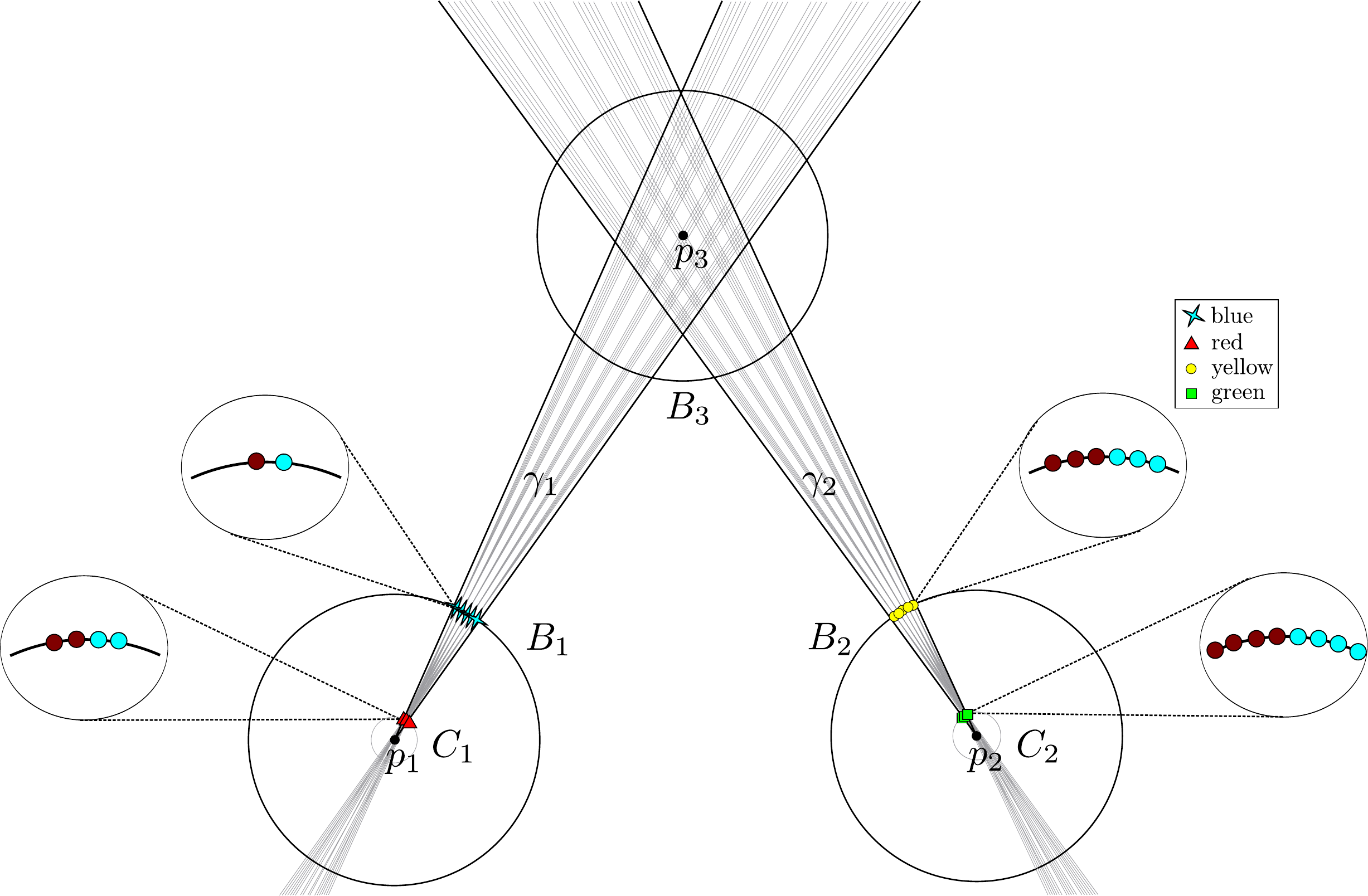}
  \end{center}
    \caption{A bi-colored set with $\Omega(n^4)$ different
colored radial orderings.}\label{fig:colormany}
\end{figure}

\begin{theorem}\label{thm:fcol_lower}
$\col{f}(n) \ge \Omega(n^4)$.
\end{theorem}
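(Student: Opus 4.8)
The goal is to construct a bi-colored point set with $\Omega(n^4)$ distinct colored radial orderings. Since there are only $O(n^4)$ cells in the order partition (Theorem~\ref{thm:f_upper}), and colored radial orderings are coarsenings of ordinary ones, we essentially need a configuration where a constant fraction of the order-partition cells give pairwise-distinct \emph{colored} orderings. The natural plan, suggested by Figure~\ref{fig:colormany}, is to place the $2n$ points into two well-separated groups — say $n$ points on the left and $n$ points on the right — chosen so that as an observation point sweeps through a large family of cells in a band between the two groups, the interleaving pattern of colors changes each time.

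Concretely, I would proceed as follows. First, put the $n$ ``right'' points $r_1,\dots,r_n$ close together, roughly on a tiny arc, and similarly the $n$ ``left'' points on a tiny arc far to the left; colour them so that colours alternate around each little arc. Place the observation points in a thin horizontal band strictly between the two clusters. For such an observation point $p$, the radial ordering of $S$ around $p$ looks like: the left cluster appears as one contiguous block (in some order), then the right cluster appears as another contiguous block (in some order). Moving $p$ within the band, the order within the left block changes according to which of the $\binom{n}{2}$ half-lines spanned by left points $p$ has crossed, and likewise for the right block; the two blocks' internal orders vary essentially independently. So the problem reduces to: choose the left cluster (and its colouring) so that $\Omega(n^2)$ distinct \emph{colored} orderings of the left block are realised as $p$ ranges over the band, and similarly $\Omega(n^2)$ for the right block — then the product gives $\Omega(n^4)$ distinct colored orderings of $S$, and I would invoke the Partition Lemma (Lemma~\ref{lem:sep}) with $R$ = red points, $B$ = blue points to certify that points in different such cells really do give different colored orderings (one checks the relevant separating half-lines don't cross the band).

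The remaining combinatorial heart is the one-cluster claim: \emph{a set of $n$ points on a small convex arc, two-coloured alternately, has $\Omega(n^2)$ distinct colored orderings as seen from the far band.} As $p$ moves along the band past the arc, the order of the arc points changes by adjacent transpositions; starting from one extreme it is $1,2,\dots,n$ and sweeping to the other extreme it reverses through a sequence of roughly $\binom n2$ orderings, each obtained from the previous by swapping one adjacent pair. Since the colouring alternates, swapping two adjacent points that currently sit at positions $i,i+1$ changes the colour word exactly when those two points have different colours, which — because adjacency in the word tracks the sweep-order, not the original indices — happens a constant fraction of the time. A clean way to make this rigorous: track the colour word after each transposition and argue that among any $O(n)$ consecutive transpositions the word takes $\Omega(n)$ distinct values (e.g. the number of colour-changes, or ``blocks'', in the word changes by $\pm$ something controllable), so over the whole sweep of $\Theta(n^2)$ transpositions we see $\Omega(n^2)$ distinct colour words. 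I would pick an explicit arc (say points $(\varepsilon i, \varepsilon i^2)$ on a parabola) so that the sweep visits all $\binom n2$ orderings in a predictable order and the bookkeeping is transparent.

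The main obstacle I anticipate is precisely this bookkeeping: showing that \emph{colored} orderings — not just uncolored ones — stay distinct in quantity $\Omega(n^2)$ per cluster, since a single adjacent transposition is invisible to the colour word when it swaps two like-coloured points, and one must rule out long stretches of such invisible moves. Choosing the alternating colouring together with a parabola-type arc (so that the sweep order of the points is a reversal of their index order, hence adjacent-in-word pairs are adjacent-in-index pairs with opposite colours) is what kills this: every transposition along the main part of the sweep is between oppositely-coloured points and hence visible. Once that is in place the rest is packaging — separating the clusters enough that the two blocks are independent, and invoking the Partition Lemma to turn ``different cells'' into ``different colored orderings.''
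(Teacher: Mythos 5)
The overall architecture of your construction (two far-apart clusters whose internal orderings vary independently over a middle observation region, giving a product bound $\Omega(n^2)\cdot\Omega(n^2)$) is sound and is essentially the skeleton of the paper's own construction. But the combinatorial heart of your argument --- that a single cluster of $n$ alternately coloured points on a convex arc realises $\Omega(n^2)$ distinct colour words over the sweep --- is not established, and the justification you give for it is false. During a monotone sweep that reverses the order of the arc points, \emph{every} pair $\{i,j\}$ is transposed exactly once, not only index-adjacent pairs; on the parabola $(i,i^2)$ the pair $\{i,j\}$ swaps when the sweep direction reaches slope $i+j$, so the roughly $\tfrac{1}{2}\binom{n}{2}$ pairs with $i\equiv j \pmod 2$ are same-coloured under the alternating colouring and their transpositions are invisible. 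Your parenthetical claim that ``adjacent-in-word pairs are adjacent-in-index pairs'' holds only at the two extremes of the sweep. Worse, even a visible transposition need not produce a colour word never seen before: the natural monotone certificate --- the number of (blue-before-red) inversions of the colour word --- increases at a visible swap when the red point of the pair has the smaller index and \emph{decreases} otherwise, and under the alternating colouring both kinds occur, so words genuinely recur (already for $n=6$ a direct computation of the parabola sweep shows the words $BRRBRB$ and $BRBRRB$ each arising at two well-separated moments). Hence ``$\Omega(n^2)$ visible transpositions'' does not yield ``$\Omega(n^2)$ distinct words'' without an additional invariant, and you have not supplied one.

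The gap is repairable: colour the first $n/2$ arc points red and the last $n/2$ blue; then every visible transposition moves a smaller-index red point past a larger-index blue point, so the inversion count of the colour word increases by exactly one at each of the $(n/2)^2$ visible swaps and is unchanged at invisible ones, certifying $(n/2)^2+1$ pairwise distinct words. The paper achieves the same effect differently: each ``cluster'' consists of $m$ points of one colour on a circle and $m$ of the other colour on a tiny concentric circle, arranged so that the only half-lines entering the observation region $B_3$ are bichromatic, and a monotone statistic (the number of blue points following the $k$-th red point) never backtracks along any straight walk; a final pattern-substitution step reduces four auxiliary colours to two. Two further points in your write-up need attention: (i) the Partition Lemma (Lemma~\ref{lem:sep}) is the wrong tool here --- with $R$ the red points and $B$ the blue points its hypothesis forbids crossing exactly the red--blue half-lines, which are the only ones visible to the colour word, and its conclusion concerns \emph{uncoloured} orderings, so it cannot certify that two coloured orderings differ; (ii) the independence of the two blocks requires an explicit grid condition --- every cell of the left cluster's half-line arrangement restricted to the band must meet every cell of the right cluster's --- which is precisely what the paper's conditions (1)--(3) on $B_3$ arrange and should not be waved through.
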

\begin{proof}
Assume that $n \ge 20$.
We start by constructing a four-colored
set of points $S'$ with $\Omega(n^4)$ distinct colored radial
orderings; afterwards we obtain $S$ by replacing each point 
of a given color with a suitable ``pattern'' of 
red and blue points. These four patterns are chosen
so that if they appear consecutively in a radial 
ordering, then any other equivalent radial ordering
must match patterns of the same type. Since
the patterns behave like the original four colors,
the new set also has $\Omega(n^4)$ colored
radial orderings.

Let $B_1$, $B_2$ and $B_3$ be
three disks of radius $1/4$, whose
centers $p_1$, $p_2$, and $p_3$ are
the vertices of an equilateral
triangle of side length equal to one.
Let  $\varepsilon, \alpha>0$. Let 
$C_1$ and $C_2$ be circles
of radius $\varepsilon$  centered
at $p_1$ and $p_2$, respectively.
Let $\gamma_1$ and $\gamma_2$ 
be infinite wedges of angle $\alpha$, with apices $p_1$ and $p_2$
respectively. Assume that $\gamma_1$ is
bisected by the line segment joining $p_1$ and $p_3$, while
$\gamma_2$ is bisected by the line segment joining $p_2$ and $p_3$. 
Refer to Figure \ref{fig:colormany}. Let $m$ and $r$ be the only natural numbers such
that $n=10m+r$ and $10 \le r \le 19$.
Partition $\gamma_1$ with $m$ infinite rays emanating from
$p_1$, such that the angle between two consecutive
rays is $\alpha/(m+1)$. Do likewise for $\gamma_2$, with
$m$ infinite rays emanating from $p_2$ .
At every point of intersection of these rays
with the boundary of $B_1$, place a blue point;
at every point of intersection with $C_1$
a red point; at every point of intersection
with the boundary of $B_2$ a yellow point;
finally at every point of intersection with $C_2$
a green point.  Thus $m$ points of each color are placed.
This ends the construction of $S'$.

 Let $L$ be the set of lines
passing through a red and a blue point. Let $L'$ be
the set of lines passing through a yellow and a green 
point. Choose $\alpha$ and $\varepsilon$  small enough so that the following
conditions are met: 
(1) Every line in $L \cup L'$ intersects the interior of $B_3$ and these
are the only lines passing through two points of $S'$ that
do. 
(2) No two lines in $L$ nor two lines in $L'$ intersect
at a point in the interior of  $B_3$.
(3) Every line in $L$ intersects every line in $L'$ at
a point in $B_3$. By the previous conditions
and the fact that $|L|=m^2$ and $|L'|=m^2$,  $L \cup L'$
splits $B_3$  into precisely
$(m+1)^4$ cells. For each of these cells
choose a point $q_i$ in its interior. We show that
the colored radial orderings of $S'$ as seen
from each of these points is different.
Note that for each point in $B_3$ there is a line 
separating the red and blue points
from the green and yellow points. Thus we may
assume, that the colored radial orderings as seen
from points in $B_3$ are written 
so that all the blue and red points appear before
the green and yellow points. Let $q_i$ and $q_j$ be
two points in different cells of the
order partition. Consider the colored radial ordering
when walking from $q_i$ to $q_j$ in a straight line. By conditions (1) and (2),
the only transpositions that occur when a half-line is crossed
 is between a red and a blue
point or between a yellow and a green point. This
implies that the $k$-th red point
is always the same red point and that the number of blue points after
the $k$-th red point is either
increasing or decreasing monotonically;
the same observation holds for the green
and yellow points. Therefore, in the walk
once a line in $L \cup L'$ is crossed, all colored
radial orderings afterwards will be distinct.
Thus the number of different colored radial orderings
of $S'$ is at least $(m+1)^4$, which is $\Omega(n^4)$.

 To construct $S$, we replace the points
in $S'$ by patterns of red and blue points, in such
a way that the colored radial orderings at points $q_i$ remain
different.
The points in the patterns replacing a point $p \in S'$ are placed
consecutively in the same circle containing $p$.
If these points are placed close enough to $p$, then
they will appear consecutively in the colored radial ordering with respect to
every point $q_i$.
The points of $S'$ are replaced  in the following way: every blue point with a pattern of one
red and one blue point; every red point with a pattern
of two red and two blue points; every yellow point
with a pattern of three red and three blue points; and
every green point with a pattern of four red
and four blue points. Refer to Figure \ref{fig:colormany}.
 Note that our choice
of patterns implies that two equivalent
radial orderings must match patterns of the same type.
So far, $10m$ red and $10m$ blue points have been
placed. The remaining $2r$ points
can be placed in such a way that in the radial
ordering with respect to every point $q_i$  these $r$ red points appear
consecutively followed by these $r$ blue points. This final condition
guarantees that the colored radial orderings at each $q_i$ remain
different.
\end{proof}



\begin{figure}[t]
  \begin{center}
    \includegraphics[width=0.6\textwidth]{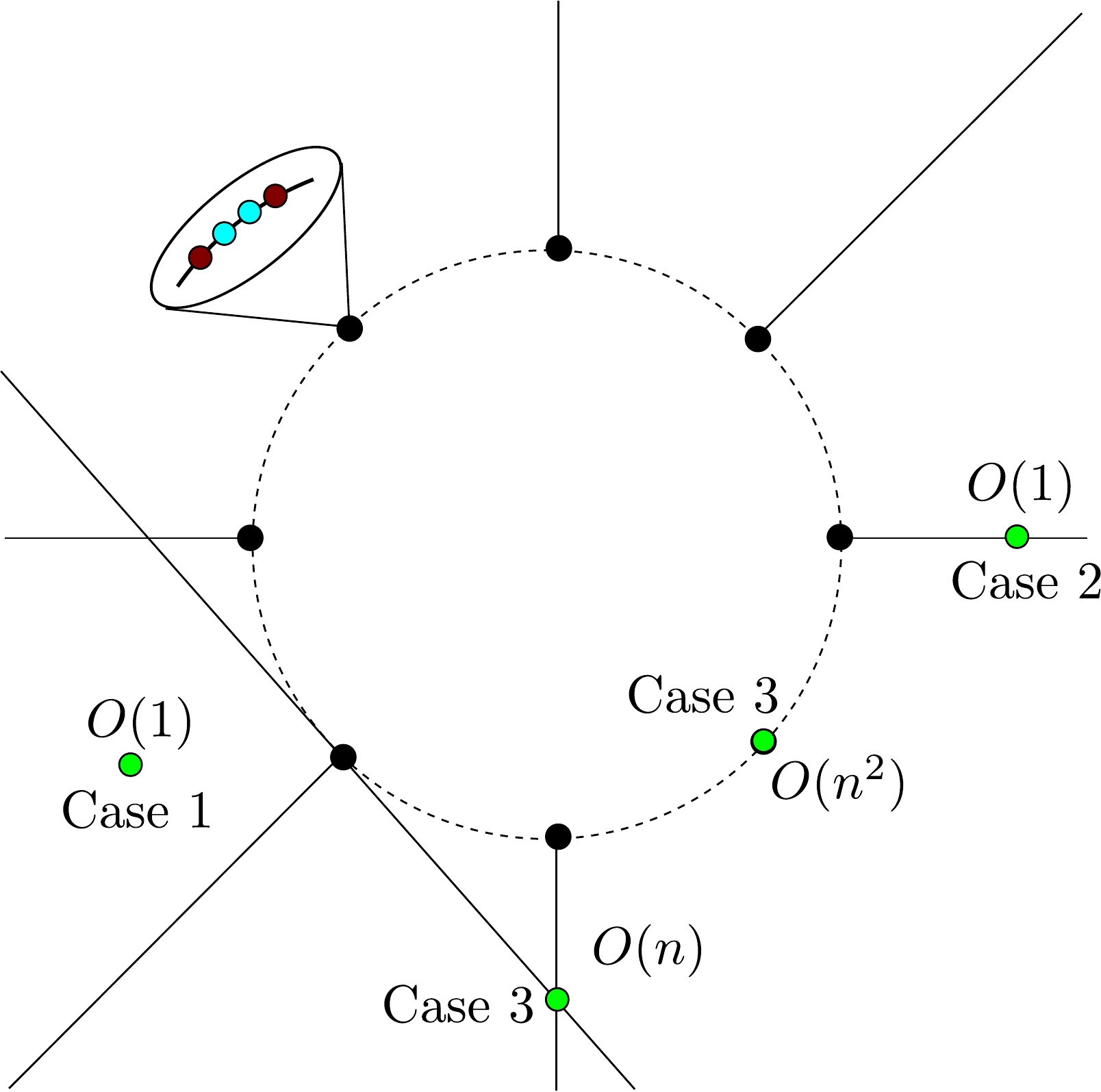}
  \end{center}
    \caption{A bi-colored set with $O(n^2)$ different
colored radial orderings.}\label{fig:colorfew}
\end{figure}

\begin{theorem} \label{thm:gcol_upper}
$\col{g}(n) \le O(n^2)$.
\end{theorem}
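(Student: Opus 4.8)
The plan is to construct a bicolored set $S$ with $\col{\rho}(S)=O(n^{2})$. Fix two points $a$ and $b$ at distance $1$ and a very small $\varepsilon>0$; place the $n$ red points inside the disk of radius $\varepsilon$ around $a$ and the $n$ blue points inside the disk of radius $\varepsilon$ around $b$, and perturb slightly so that $S$ is in strong general position (see Figure~\ref{fig:colorfew}). I claim that all but a thin region of observation points see the colored radial ordering $R^{n}B^{n}$ (all $n$ reds, then all $n$ blues), and that the observation points inside that thin region contribute only $O(n^{2})$ further colored orderings.

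For the first claim, note that the colored radial ordering with respect to $p$ changes only when $p$ crosses a half-line spanned by two points of \emph{different} colors: crossing a monochromatic half-line merely transposes two points of the same color and does not alter the word of colors. Now, whenever the cluster farther from $p$ subtends only a small angle at $p$ --- which happens for every $p$ whose distances to $a$ and to $b$ differ by more than a constant factor, and also for every $p$ close to one of the clusters (then the other is far) --- that cluster appears as a single monochromatic block, and placing a block of $n$ like-colored points beside or inside the other color class produces exactly the ordering $R^{n}B^{n}$; in particular this covers $p\in\operatorname{conv}(R)$ and $p\in\operatorname{conv}(B)$. Hence a different colored ordering can occur only when the two clusters are at comparable distance from $p$ \emph{and} in nearly the same direction from $p$; a short computation shows this forces $p$ into a thin neighborhood of one of the two rays obtained by extending the line $ab$ beyond $a$ or beyond $b$. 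Call this the interesting region; by the symmetry $a\leftrightarrow b$ it suffices to bound the orderings seen from the part beyond $a$.

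It remains to show that only $O(n^{2})$ colored radial orderings are realized in the interesting region. There the only half-lines that can be crossed are the $n^{2}$ bichromatic half-lines that emanate ``westward'', each starting at a red point and pointing away from a blue point (the ``eastward'' ones all lie on the far side of the clusters). Since the whole blue cluster subtends an angle of only $O(\varepsilon)$ at each red point, the $n$ half-lines leaving a fixed red point form a cone of opening $O(\varepsilon)$, and their $n$ apexes lie within $\varepsilon$ of $a$, so the entire family of $n^{2}$ half-lines is confined to a single thin angular slab. The heart of the argument is to choose the perturbed positions of the red points so that this family has only $O(n^{2})$ pairwise intersections --- equivalently, cuts the interesting region into only $O(n^{2})$ cells --- whence the colored radial ordering, being constant on each cell, takes only $O(n^{2})$ values there. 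The main obstacle is precisely this cell count: an arbitrary placement can create $\Theta(n^{4})$ intersections among these half-lines, and the intersections cannot be made to vanish, since the exact concurrences that would do so (a collinear color class, or all bichromatic lines through a single point) are forbidden by strong general position; so one must realize the almost-concurrent configuration carefully within the general-position constraint. Granting this, the single ordering $R^{n}B^{n}$ from outside together with the $O(n^{2})$ orderings from the two halves of the interesting region give $\col{\rho}(S)=O(n^{2})$, hence $\col{g}(n)\le O(n^{2})$.
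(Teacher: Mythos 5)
Your reduction is sound up to its last step: only bichromatic half-lines can change the colored word, and every observation point outside a thin neighborhood of the two rays obtained by extending the line $ab$ sees the single circular word $R^nB^n$. The genuine gap is exactly the step you flag with ``granting this'': that the points can be perturbed so that the $n^2$ relevant bichromatic half-lines cut the interesting region into only $O(n^2)$ cells. This is not just unproved --- it fails for \emph{every} placement. Write $y_x$ for the coordinate of a point $x$ in the direction perpendicular to $ab$. For any two reds $r_1\neq r_2$ and two blues $b_1\neq b_2$, the difference of the heights of the lines $r_1b_1$ and $r_2b_2$ equals (up to $O(\varepsilon^2)$ corrections) $y_{r_1}-y_{r_2}$ at the red cluster and $y_{b_1}-y_{b_2}$ at the blue cluster; for exactly one of the two matchings $\{r_1b_1,r_2b_2\}$ and $\{r_1b_2,r_2b_1\}$ these two differences have the same sign, and that pair of lines then crosses \emph{outside} the segment joining the clusters. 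Since all these lines stay inside the thin slab around the line $ab$, this places $\binom{n}{2}^2=\Theta(n^4)$ bichromatic crossings in the union of your two interesting regions, no matter how the points are perturbed (one checks separately that they cannot all be made parallel, nor all pushed inside the convex hulls of the clusters, where the ordering is trivial). So at least one interesting region is cut into $\Omega(n^4)$ cells and your count gives only the trivial $O(n^4)$. Rescuing the construction would require showing that these $\Theta(n^4)$ cells realize only $O(n^2)$ distinct circular colored words --- a different and substantially harder claim (distinct cells do have distinct red--blue interleaving matrices; only the passage to the multiset $\{\#\{\text{blues before }r\}\}_{r}$ modulo rotation could cause collapse), and you give no argument for it; it is not clear it is even true.

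The paper sidesteps this obstruction by changing the construction rather than the analysis. It starts from $n/2$ points placed almost evenly on a circle and replaces each by a tiny \emph{palindromic} pattern ``red, blue, blue, red''. Because each pattern reads the same in either orientation, every observation point lying in a cell that limits onto a full two-dimensional cell of the coarse arrangement $A(S')$ sees the single word $(RBBR)^{n/2}$; the only other orderings come from cells collapsing onto the $O(n^2)$ edges and vertices of $A(S')$, and a case analysis (using the rotational symmetry of the configuration, so that which cluster is involved does not matter) bounds the orderings arising there by $O(n^2)$. The moral is that some symmetry forcing almost all cells to see the \emph{same} word is needed; two monochromatic clusters do not provide it, because the $\Theta(n^4)$ unavoidable bichromatic crossings all land precisely where the colors interleave.
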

\begin{proof}
Recall that we are assuming that $n$ is even. 
We employ a similar technique as in the proof of Theorem \ref{thm:fcol_lower}.
We start with a set $S'$ of $n/2$ points,
placed almost evenly in the unit circle. All the points
of $S'$ have the same color and thus
the colored radial orderings of $S'$ are all
equivalent. Afterwards, we replace each
point of $S'$ with a symmetric pattern
of red and blue points. This is done in such a way that the new number of distinct colored radial
orderings increases at most to $O(n^2)$.

Let then $X$ be a set of $n/2$ points placed evenly on the unit
circle centered at the origin.
Explicitly, $X:=\{(\cos(\frac{4 \pi}{n} \cdot i),\sin(\frac{4 \pi}{n} \cdot i))\ |i=1,2,\dots,n/2\}$.
We choose $S'$ to be a set
of $n/2$ points (of the same color) each arbitrarily close to a distinct
point of  $X$ so that $S'$ is in strong general position. 
Let $L'$ be the set of lines passing through every pair of points
in $S'$ together with the lines passing through every point $p$ of $S'$ and
tangent to the circle centered at the origin and passing through $p$.
Let $A(S')$ be the line arrangement generated by $L'$.

Let $\delta>0$. We form a new set $S_\delta$, by replacing
 each point $p$ of $S'$ with a pattern of four
``red,blue,blue,red'' points, placed clockwise consecutively at the 
same distance as $p$ from the origin. The first (red) point is placed
at $p$, the next three points are placed at distance $\delta$
from the previous point.  Note that for
some small enough value $\delta'$, all line arrangements
$A(S_\delta)$ with $\delta < \delta'$ are combinatorially equivalent.
From now on assume that $\delta < \delta'$ and let $\delta$ tend to $0$.
The set of lines passing through each pair of points in $S_\delta$
tends to $L'$. Likewise, $A(S_\delta)$ tends to $A(S')$.
By this we mean that for each element $C$ (vertex, edge or cell)
of $A(S_\delta)$ there is an element $D$ of $A(S')$, so that
for each open set $\mathcal{O}$ containing $D$ there is a small enough value
for  $\delta$ 
so that $C$ is contained in $\mathcal{O}$.

We calculate an upper bound on the number of different
colored radial orderings of $S_{\delta}$ by considering the
colored radial orderings with respect to points
in the interior of each cell of $A(S_{\delta})$.
Let $C$ be a cell of $A(S_{\delta})$ and $q$ be 
a point in the interior of $C$.

There are three different cases according
to the limit of $C$ in $A(S')$ (refer to Figure~\ref{fig:colorfew}):

\begin{itemize}
\item {\bf Case 1.} $C$ tends to a cell of $A(S')$.\\
In this case, every pattern
replacing a point of $S'$ will
appear consecutively in the colored radial
ordering around $q$. Moreover, by
the symmetry of the patterns they are 
all ``red,blue,blue,red". Thus in this
case there is only one possible colored
radial ordering.

\item{\bf Case 2.} $C$ tends to an edge of $A(S')$.\\
We distinguish two sub-cases, whether
the edge is contained in one of the lines
passing through two points $p_i$ and $p_j$ of $S'$ or whether it
is contained in one of the tangent lines passing through a point $p_k$ of $S'$. 

In the first sub-case, the patterns at points different
from $p_i$ and $p_j$ will appear consecutively
in the colored radial ordering. 
However, at $p_i$ and $p_j$, the points in the patterns will appear together but 
intermixed. Since there are only
$8$ points involved, there is only a constant number
of  ways in which this can happen.
The second sub-case is similar.

\item{\bf Case 3.} $C$ tends to a vertex of $A(S')$.\\
In this case, $C$ may tend to a vertex
that is a point of $S'$ or the intersection
of two lines $\ell_1$ and $\ell_2$ of $L'$.

Suppose that $C$
tends to a point $p_i$ of $S'$.
 Then $C$ must be bounded by lines passing through
points of the pattern at $p_i$ and points of $S_{\delta}$. Note
that in total there are at most $8n$ such lines and thus
there are at most $O(n^2)$ such cells.
Moreover, since we may assume that $S'$
is arbitrarily close to $X$, our
analysis does not depend on the choice of $p_i$. Hence,  the cells tending 
to any other point of $S'$ will induce
the same set of colored radial orderings.

Suppose that $C$ tends to the intersection
point of two lines $\ell_1$ and $\ell_2$ of $L'$. These lines
may be defined by two points or one point of $S'$.
In both situations, we have that the patterns
at the points of $S'$ defining each line  appear
together but intermixed. The patterns at 
any other points will appear consecutively.
Since there is at most four points
defining $\ell_1$ and $\ell_2$, the number of 
ways in which their respective pattern points
can appear is at most a constant.
The only thing left to consider is the $O(n)$ number
of ways in which the 
patterns at the points defining $\ell_1$ can 
appear with respect to those of $\ell_2$. 

There are 
at most $O(n^2)$ distinct colored radial orderings in both subcases.
\end{itemize}

Note that the total number of distinct colored radial orderings
of $S_{\delta}$ is at most $O(n^2)$. Therefore by setting $S:=S_{\delta}$,
the result follows. 
\end{proof}

We now give  a linear lower bound for the number of colored radial orderings of $S$.
Some notation is required.
For a given radial ordering $\sigma$ of $S$, let
$\sigma(i)$ be its $(i+1)$-element.
Thus, two radial orderings $\sigma$ and $\rho$ are
equivalent whenever there exists a natural number $j$ such
that $\sigma(i)=\rho(i+j)$ for all $i$ 
(where addition is taken modulo $2n$); they
are equivalent as colored radial orderings
when the color of $\sigma(i)$ is equal
to the color of $\rho(i+j)$.

\begin{figure}[t]
  \begin{center}
    \includegraphics[width=0.9\textwidth]{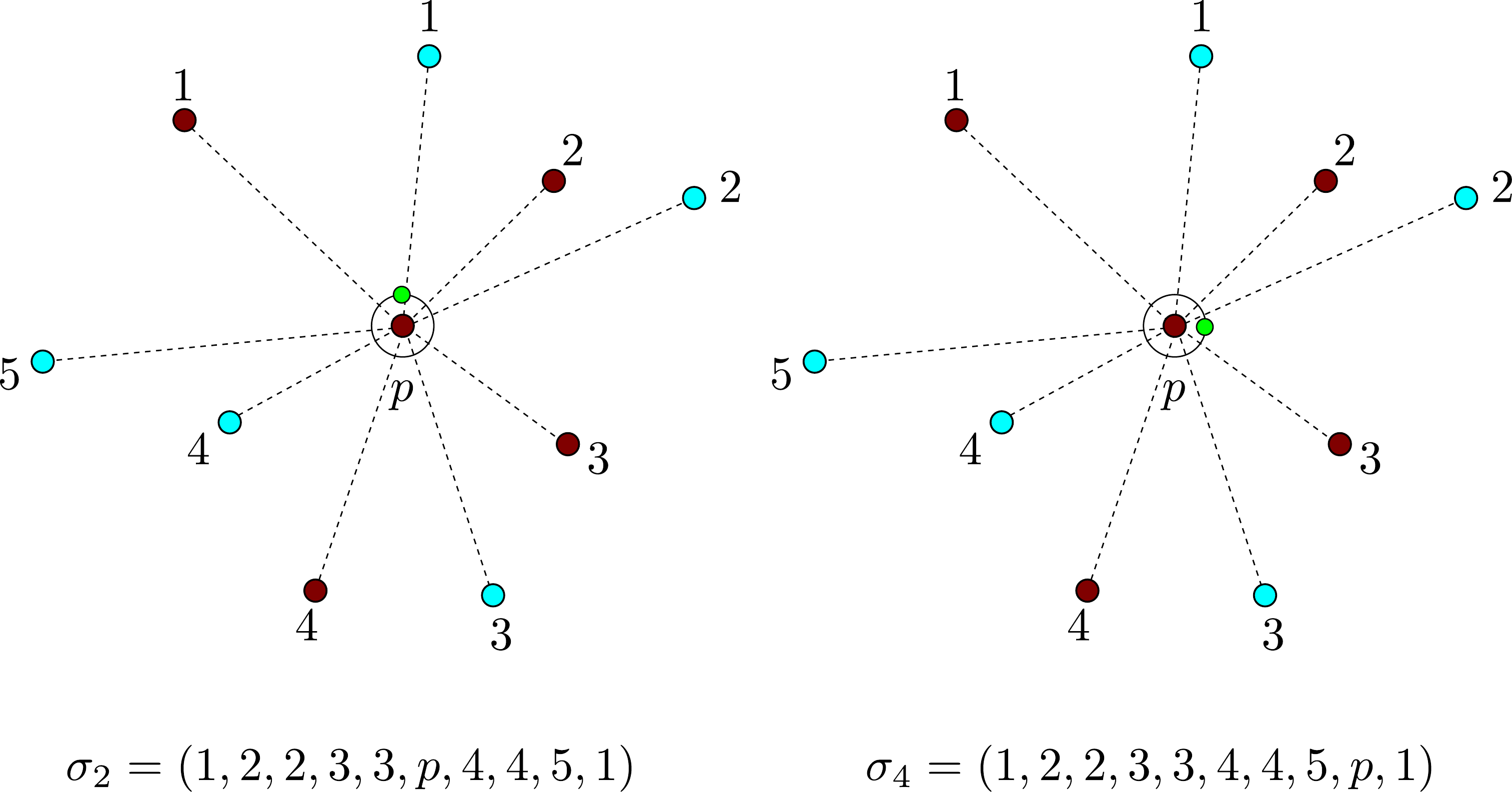}
  \end{center}
    \caption{A walk showing a linear lower bound on the number of colored
radial orderings.}\label{fig:lowercolor}
\end{figure}

\begin{theorem} \label{thm:gcol_lower}
$\col{g}(n) \ge n$.
\end{theorem}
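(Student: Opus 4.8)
My plan is to localise everything around a single convex‑hull vertex, so that the cyclic identifications that usually make such counts delicate disappear. Fix a vertex $v$ of the convex hull of $S$, let $p',p''$ be its hull‑neighbours, and let $\Gamma_v$ be the cone from the proof of Theorem~\ref{thm:g_lower}. As observed there, no half‑line spanned by $v$ and another point of $S$ meets the interior of $\Gamma_v$, so $v$ keeps a fixed position in the radial ordering of $S$ throughout $\Gamma_v$; and one checks that far inside $\Gamma_v$ the point $v$ is the extreme element, so that fixed position is the first one. Hence the colored radial ordering seen from any $q\in\Gamma_v$ is the genuine \emph{linear} word $c(v)\cdot W_q$, where $W_q$ records the colors of $S\setminus\{v\}$ read clockwise starting immediately after $v$; two points of $\Gamma_v$ see the same colored radial ordering exactly when their words $W_q$ agree. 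Inside $\Gamma_v$, therefore, distinct linear words give distinct colored radial orderings, with no rotation to worry about.

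Next I would move an observation point along a path inside $\Gamma_v$ and control $W_q$ by a monotone potential. Every half‑line crossed inside $\Gamma_v$ is spanned by two points of $S\setminus\{v\}$; crossing it transposes those two points in the radial ordering, changes $W_q$ only when the two points have different colors, and then changes the inversion count
\[
\Phi(q)\;:=\;\#\{(r,b)\ :\ r\in R,\ b\in B,\ r\ \text{precedes}\ b\ \text{in}\ W_q\}
\]
by exactly $\pm 1$, whereas a monochromatic crossing leaves $\Phi$ unchanged. Since $\Gamma_v$ is connected and $\Phi$ moves in unit steps, the number of distinct colored radial orderings produced by $\Gamma_v$ is at least $\max_{\Gamma_v}\Phi-\min_{\Gamma_v}\Phi+1$. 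So it suffices to exhibit two observation points inside $\Gamma_v$ whose inversion counts differ by at least $n-1$.

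This last estimate is the crux, and I expect it to be the main obstacle. The natural candidates to compare are: a point taken very close to $v$, for which $W_q$ is essentially the radial ordering of $S\setminus\{v\}$ about $v$ itself, and points receding to infinity along the two extreme directions of $\Gamma_v$, for which $W_q$ is a projection order of $S\setminus\{v\}$ and where, as the viewing direction turns inside $\Gamma_v$, $\Phi$ is itself $1$‑Lipschitz. A single cone need not suffice — it can fail when $S\setminus\{v\}$ subtends a tiny angle from $\Gamma_v$ and the red–blue line directions happen to avoid the angular span of $\Gamma_v$ — so I would run the argument over all hull vertices simultaneously: the angular spans of the cones $\Gamma_v$ tile the circle of directions, so the directions of the $\Omega(n^2)$ lines through a red and a blue point cannot avoid all of them, and a counting/averaging step over the convex hull should force one cone (or a neighbourhood of one hull vertex) to deliver $n$ colored radial orderings. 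Making this bookkeeping watertight in every configuration, while keeping everything symmetric under exchanging the two colors, is where the real work lies; that the final constant is exactly $n$ is consistent with the picture in which one red (or blue) point plays the role of $v$ and is swept past the $n$ points of the other color.
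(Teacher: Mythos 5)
There are two genuine gaps, and the first is fatal to the approach as stated. You claim that, because $v$ occupies a fixed position in the radial ordering throughout $\Gamma_v$, ``distinct linear words give distinct colored radial orderings, with no rotation to worry about.'' This is false: a colored radial ordering is by definition a \emph{cyclic} sequence of colors, and two observation points in $\Gamma_v$ can produce different anchored words $c(v)\cdot W_q$ that are nontrivial rotations of one another as cyclic words (e.g.\ $RBBR$ versus $RRBB$) --- the colors carry no memory of which entry is $v$. Consequently your inversion count $\Phi$, which depends on the anchoring at $v$, is not an invariant of the colored radial ordering, and $\max\Phi-\min\Phi+1$ lower-bounds only the number of distinct \emph{anchored} words, not the number of distinct colored radial orderings. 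Coping with precisely this rotational ambiguity is the entire content of the paper's proof: there one walks around a small circle centered at a red point $p$ (so that only $p$ moves in the ordering, sweeping past each of the $n$ blue points), obtains $n$ orderings $\sigma_0,\dots,\sigma_{n-1}$ that are obviously distinct as anchored words, and then spends the bulk of the argument on a cycle decomposition of an auxiliary directed graph to rule out the possibility that two of them are rotations of each other as colored cyclic words.

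The second gap is the quantitative step you yourself flag: you never exhibit two points of some cone whose inversion counts differ by at least $n-1$. The averaging idea over hull vertices is only a heuristic --- knowing that many red--blue line directions fall in the angular span of some $\Gamma_v$ gives no control on $\max\Phi-\min\Phi$ over that cone, since the $\pm1$ contributions of the corresponding crossings along a path can cancel, and a cone can meet $\Omega(n^2)$ bichromatic half-lines while $\Phi$ oscillates within a bounded range. So even if the rotation issue were repaired, the proof would remain incomplete. I would recommend abandoning the cone/potential scheme and localizing around a single colored point as the paper does, where the $n$ candidate orderings come for free and the whole difficulty is concentrated in (and resolved by) the rotation argument.
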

\begin{proof}
Throughout the proof we use both \emph{colored} and \emph{non-colored}
radial orderings. In each instance we explicitly mention
to which of the two types of radial orderings we are
referring to. To obtain the claimed lower bound, 
we show a walk in which $n$
distinct \emph{colored} radial orderings are seen.
First we choose a red point $p$ of $S$ and let $\mathcal{C}$
be a circle centered at $p$. After that, we walk once
clockwise around $\mathcal{C}$. We choose $\mathcal{C}$ to
be small enough so that the only half-lines crossed
in the walk are those involving $p$ (see Figure~\ref{fig:lowercolor}). 

Consider the \emph{non-colored}  radial orderings
seen in this walk. Note that since
$\mathcal{C}$ does not cross any half-line defined
by points of $S \setminus \{p\}$, the points
of $S \setminus \{p\}$ remain fixed in these \emph{non-colored} radial
orderings. The only point that changes position
is $p$; it moves counter clockwise, transposing
an element of $S \setminus \{p\}$ every
time a half-line is crossed. We prove that every time that $p$ transposes
a blue point, a new different \emph{colored} radial ordering is seen.

Since in the walk around $\mathcal{C}$, the blue
points of $S$ always appear in the same
\emph{non-colored} radial ordering, we assume
that all the \emph{non-colored} radial orderings of $S$
as seen from points in $\mathcal{C}$ are
written starting at the same blue point.
Among these radial orderings, for each $k=0,\dots ,n-1$,
 let $\sigma_k$ be the radial ordering
in which $p$ is just after
the $(k+1)$-th blue point; refer to Figure~\ref{fig:lowercolor}. We show
that all the \emph{colored} radial orderings associated
to these $n$ \emph{non-colored} radial
orderings are distinct.

Let then $\sigma_k$ and $\sigma_l$ be two
such \emph{non-colored} radial orderings encountered at 
points $q_1$ and $q_2$ in the walk, respectively. 
Suppose that $\sigma_k$ and $\sigma_l$ are
equivalent as \emph{colored} radial orderings.
 Then there  exist a fixed natural number $j$ such that for all $i=0,\dots, 2n-1$, 
the color of $\sigma_k(i)$ is equal to 
the color of $\sigma_l(i+j)$. 

We now define a directed graph that captures
the relationship between $\sigma_k$ and 
$\sigma_l$; we employ the structure
of this graph to conclude that $k$ 
must equal $l$. 
Let $G$ be the directed graph whose vertex
set is $S$ and in which, for all $i=0,\dots, 2n-1$
there is an arc from $\sigma_k(i)$ to $\sigma_l(i+j)$; see
Figure~\ref{fig:cycles}. 
Note that every vertex
in $G$ has indegree and outdegree
equal to one. Therefore, $G$ is the union of pairwise disjoint 
directed cycles of points
of the same color. 

Let $\Gamma$ be the cycle containing $p$ and
let $S':=S\setminus V(\Gamma)$. 
Let $\rho_1$ and $\rho_2$ be the \emph{non-colored} radial
orderings of $S'$ as seen from $q_1$ and
$q_2$, respectively. We make
the additional assumption that
$\rho_1$ is written starting
at $\sigma_k(0)$ while
$\rho_2$ is written starting
at $\sigma_l(2n-j)$ (note that
being blue, these points are not in $\Gamma$).  Since in
particular $p $ is not in $S'$, these 
radial orderings are equivalent.
Therefore, there exists a fixed natural number $j'$ such that 
 $\rho_1(i)=\rho_2(i+j')$,
for all $i$. Since
$S'$ comes from removing the points
of $\Gamma$, $\rho_1$ can be formed
by removing the elements of $\Gamma$ from
$\sigma_k$ and $\rho_2$ can be formed
by writing $\sigma_l$ starting
at $\sigma_l(2n-j)$ and then removing
the elements of $\Gamma$ (See Figure~\ref{fig:cycles}). Thus the color of $\rho_1(i)$
is equal to the color of $\rho_2(i)$ for all $i$.

Let $G'$ be the directed graph whose vertex set
is $S'$ and in which there is an arc
from $\rho_1(i)$ to $\rho_2(i)$. As before,
every vertex in $G'$ has indegree and
outdegree equal to one. Therefore $G'$
 is the union of disjoint cycles of
vertices of the same color (in fact
$G'$ is the subgraph of $G$ induced
by $S'$). Since
$\rho_2$ is just a ``shift'' of $j'$ places to the right of $\rho_1$,
all of these cycles have the same
length $m$. Therefore, both
the number of red and blue
points in $S'$ are multiples
of $m$. This implies that
the number of vertices in
$\Gamma$ is also a multiple of $m$.

Let $r \cdot m$ be the length
of $\Gamma$, since $\Gamma$ is not empty, then $r \ge 1$.
Assume that $\Gamma$, starting
from $p$ is given by
$(p=v_1,v_2,\dots,v_m,\dots,v_{2m},\dots,v_{rm})$.
Let $b_k$ be the $(k+1)$-th blue point and
$\Gamma':=(b_k=u_1,\dots,u_m)$ be 
the cycle in $G$ containing $b_k$.
Consider the following sequence
of pairs of vertices $(u_1,v_1),(u_2,v_2),\dots,(u_m,v_m)$.
Note that in $\sigma_k$, the point  $v_1=p$ is just after  the point $u_1=b_k$;
afterwards, for $2 \le i \le m$, the point $v_i$ is just after
the point $u_i$ in both $\sigma_l$ and $\sigma_k$. 
(Recall that the order of $S\setminus \{p\}$ in $\sigma_k$ and $\sigma_l$
is the same.) Suppose that
$r>1$, then the point $v_{m+1}$ is just after $u_1$ in $\sigma_l$ while
in $\sigma_k$ it is just after the point $v_1$(which is equal to $p$).
Consider now the  following
sequence of vertices $(v_1,v_{m+1}),(v_2,v_{m+2}),\dots,(v_{(r-1)m+1},v_{rm+1}=p)$.
From the same arguments as before, for $m+2 \le i \le rm$, the point $v_i$ is just after
the point $v_{i-m}$ in both $\sigma_k$ and $\sigma_l$. 
For $i=rm+1$, the point $v_{rm+1}$ is just 
after the point $v_{(r-1)m+1}$ in $\sigma_l$, but $v_{rm+1}=p$
and $v_{(r-1)m+1}$ is red, a contradiction since
$p$ is just after a blue point in $\sigma_l$. 
Thus $r=1$. This implies that $\sigma_k=\sigma_l$, since $v_1$ 
(which is equal to $p$) is after
$u_1$ (which is equal to $b_k$) in both $\sigma_l$ and
$\sigma_k$. 
\end{proof} 

\begin{figure}[t]
  \begin{center}
    \includegraphics[width=0.9\textwidth]{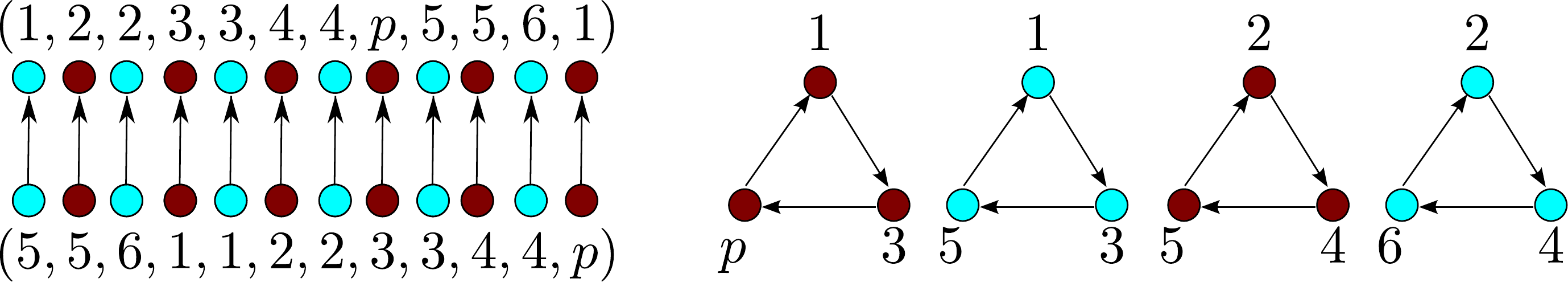}
  \end{center}
    \caption{Two equivalent colored radial orderings and their
corresponding graph.}\label{fig:cycles}
\end{figure}

\section{Conclusions}\label{sec:con}

We proved an upper bound of $O(n^4)$ and
a lower bound of $\Omega(n^{3})$ on
the number of radial orderings that
every set of $n$ points in strong general position in the plane must have.
The upper bound was first given in \cite{poly}. As a corollary
in the same paper it was noted that every set of $n$ points
in the plane contains $O(n^4)$ different
star shaped polygonizations.  Our result
implies that every set of $n$ points in strong general
position in the plane has $\Omega(n^{3})$ different
star shaped polygonizations. We leave the closing
of this gap as an open problem:

\begin{conjecture}
$g(n)=\Theta(n^4)$.
\end{conjecture}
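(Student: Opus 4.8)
The plan is to promote the $\Omega(n^3)$ bound of Theorem~\ref{thm:g_lower} to $\Omega(n^4)$ by bounding the \emph{multiplicity} of a radial ordering, that is, the number of cells of the order partition that realize it (the matching upper bound $g(n)=O(n^4)$ being already known). The starting point is that crossing one half-line transposes two \emph{consecutive} elements of the radial ordering, and for $n\ge 3$ such an adjacent transposition of distinct elements always produces a genuinely different cyclic ordering; hence two cells sharing a boundary always induce distinct radial orderings, so the cells realizing a fixed ordering form an independent set in the cell adjacency graph of the order partition, and the only obstruction to $g(n)=\Theta(n^4)$ is a \emph{long-range} coincidence. By Lemma~\ref{lem:convex_interior} no such coincidence occurs among interior cells, so one may assume a constant fraction of the $\Omega(n^4)$ cells of Theorem~\ref{thm:part_size} are exterior. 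Reusing the cones $\Gamma_v$ attached to the vertices $v$ of $\mathrm{conv}(S)$ from the proof of Theorem~\ref{thm:g_lower}: every exterior cell meets some $\Gamma_v$, and by the Partition Lemma two cells meeting the same $\Gamma_v$ realize distinct orderings. Therefore the number of exterior cells is at most $\sum_v\#\{\text{orderings seen from }\Gamma_v\}\le k\cdot D$, where $D$ is the number of distinct exterior radial orderings and $k:=\max_{\sigma}\#\{v:\sigma\text{ is seen from }\Gamma_v\}$. Hence $D=\Omega(n^4/k)$, and it suffices to prove $k=O(1)$; already a bound $k=O(n^{1-\varepsilon})$ for some $\varepsilon>0$ would improve on Theorem~\ref{thm:g_lower}.

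To bound $k$ I would try to read ``convex hull data'' directly off an exterior radial ordering $\sigma$. From an exterior observation point $q$, all of $S$ lies in a wedge of angle $<\pi$ as seen from $q$; its two angularly extreme points are exactly the two tangency points of the tangent lines from $q$ to $\mathrm{conv}(S)$, and the complementary cyclic interval of $\sigma$ is empty. Moreover, the definition of $\Gamma_v$ pins $v$ down: as $q$ ranges over $\Gamma_v$ the order of $S\setminus\{v\}$ stays fixed and $v$ keeps a fixed position relative to each other point, so $v$ occupies a distinguished slot of $\sigma$ (essentially the unique element lying between the two tangency directions). If one could show that $\sigma$, together with the knowledge that it comes from $\bigcup_v\Gamma_v$, determines the empty interval, then this distinguished slot---and hence $v$---would be essentially determined by $\sigma$, giving $k=O(1)$.

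The main obstacle is exactly this recognition step. A cyclic word on $2n$ letters has as many as $2n$ candidate empty intervals, and $\sigma$ in isolation cannot single out the geometrically correct one; each choice is compatible with a different cone, so the argument above delivers only $k=O(n)$ and merely re-derives $g(n)=\Omega(n^3)$. Breaking the factor $2n$ seems to need a \emph{global} argument using that all the cones observe the same $S$: a second occurrence of $\sigma$ from a cone $\Gamma_{v'}$ with a different empty interval should force a rigid self-correspondence of $S$---a permutation respecting simultaneously the fixed radial order of $S\setminus\{v\}$ seen throughout $\Gamma_v$ and the chain structure of $\mathrm{conv}(S)$---in the spirit of the cycle decomposition used in the proof of Theorem~\ref{thm:gcol_lower}. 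Proving that a fixed $S$ admits only $O(1)$ such self-correspondences compatible with a given $\sigma$ is, I expect, where the real work lies.

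Two variants might avoid the barrier. First, a dichotomy on $m:=|\mathrm{conv}(S)|$: if $m=O(1)$ then a single cone already contains $\Omega(n^4/m)=\Omega(n^4)$ cells and Theorem~\ref{thm:g_lower} suffices; if $m$ is large, the orderings seen from the many cones carry so much distinguishing data (the identities of $v$ and of its two hull neighbours) that long-range coincidences should be provably rare; the delicate regime is intermediate $m$, which one could attack by grouping the hull vertices into blocks of $O(1)$ consecutive vertices and treating each block as a single ``super-apex'', reducing to the small-$m$ case up to a constant factor. Second, a direct construction in the spirit of Theorem~\ref{thm:fcol_lower}: for an arbitrary $S$, exhibit a boustrophedon-type sweep of the thin annulus just outside $\mathrm{conv}(S)$ that visits $\Omega(n^4)$ cells, use the Partition Lemma to certify distinctness within each pass, and carry a monotone counter (analogous to ``the number of blue points after the $k$-th red point'' there) to rule out repetitions across passes. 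I would attempt the dichotomy first, and I expect the intermediate-$m$ case to concentrate essentially all of the difficulty.
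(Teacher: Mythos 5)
The statement you are trying to prove is Conjecture~4.1 of the paper: the authors explicitly leave $g(n)=\Theta(n^4)$ as an \emph{open problem} (only $O(n^4)\ge g(n)\ge\Omega(n^3)$ is established, in Theorems~\ref{thm:g_lower} and the trivial upper bound). So there is no proof in the paper to compare against, and your proposal does not close the gap either: it is a research plan whose decisive step is missing, as you yourself acknowledge.

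Concretely, your double-counting reduction is sound as far as it goes: writing the number of exterior cells as at most $k\cdot D$, where $D$ is the number of distinct exterior orderings and $k$ is the maximum number of hull-vertex cones $\Gamma_v$ from which a single ordering is realized, correctly reduces the conjecture to proving $k=O(1)$ (and even $k=o(n)$ would be progress). But the only bound you actually establish is the trivial $k\le m\le n$, which reproduces verbatim the paper's own proof of Theorem~\ref{thm:g_lower} and yields nothing beyond $\Omega(n^3)$. The recognition step you flag --- that a cyclic word $\sigma$ does not determine which of its $n$ gaps is the ``empty half-plane'' gap, so that the same $\sigma$ could a priori be seen from cones at many different hull vertices --- is precisely the long-range coincidence the conjecture is about, and neither your rigidity heuristic (the self-correspondence argument modeled on Theorem~\ref{thm:gcol_lower}) nor the dichotomy on $m=|\mathrm{conv}(S)|$ is carried out; in the dichotomy, the intermediate-$m$ regime, which you correctly identify as containing all the difficulty, is left entirely open, and the ``super-apex'' blocking idea does not obviously preserve the hypothesis of the Partition Lemma (a block of several hull vertices no longer admits a line separating it from the rest of $S$ as seen from the whole enlarged cone). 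In short: the framework is a reasonable reformulation of the problem, but no new bound is proved, and the conjecture remains open.
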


We used the assumption that $S$ is in strong general
position heavily on the proof of the lower bound on $g(n)$.
However, we believe that it is not needed and that
only general position (no three collinear points)
is sufficient.

\begin{conjecture}
If $S$ is a set of $n$ points in general position in the plane, 
then it has at least $\Omega(n^4)$ distinct radial orderings.
\end{conjecture}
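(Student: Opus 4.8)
The plan is to first isolate, and then strengthen, the geometric content of Lemma~\ref{lem:convex_interior} into a single \emph{chord criterion}: if $p$ and $q$ lie in different cells of the order partition and the line $\ell$ through $p$ and $q$ meets the interior of the convex hull of $S$, then the radial orderings at $p$ and $q$ differ. I would prove this by applying the Partition Lemma (Lemma~\ref{lem:sep}) with the coloring induced by $\ell$: put in $R$ the points of $S$ on one side of $\ell$ and in $B$ those on the other (a generic choice of $p,q$ inside their cells keeps $S$ off $\ell$, which is legitimate because radial orderings are constant on cells). For a bichromatic pair the segment joining its two points crosses $\ell$, so the line through the pair meets $\ell$ \emph{on that segment}; hence the two outer half-lines of the pair, which are what the order partition actually uses, never meet $\ell$, and in particular never cross $pq$. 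Since $\ell$ cutting the hull forces $R,B\neq\emptyset$, Lemma~\ref{lem:sep} applies. Because concurrences of half-lines off $S$ can be avoided by taking a generic segment $pq$, this criterion needs only general position, so it already removes the strong-general-position hypothesis from the interior argument and settles that part of the conjecture.

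Two consequences follow at once. First, any cell meeting the \emph{interior} of the hull is distinguished from \emph{every} other cell, since a line through an interior point crosses the hull; hence the number of radial orderings is at least the number of cells meeting the hull interior, and we are done if that number is $\Omega(n^4)$. Otherwise, by Theorem~\ref{thm:part_size}, $\Omega(n^4)$ cells lie entirely in the exterior. Here I would use that all arrangement vertices lie in a bounded disk $D$: beyond $D$ the $\binom{n}{2}$ lines no longer cross, so the far region splits into only $O(n^2)$ unbounded sectors, and all but $O(n^2)$ exterior cells lie in the bounded annular region between the hull and $\partial D$. The whole difficulty is thus concentrated on these $\Omega(n^4)$ bounded exterior cells, and the chord criterion says two of them fail to be distinguished only if \emph{every} line through the two cells misses the hull interior, i.e.\ only if they are angularly close around the hull.

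To harvest $\Omega(n^4)$ distinct orderings from the bounded exterior I would combine the chord criterion with the cone construction of Theorem~\ref{thm:g_lower}. Within a single cone $\Gamma_p$ distinct cells already give distinct orderings, while the chord criterion distinguishes cells sitting near far-apart hull vertices automatically, since a segment running between the exteriors of two non-adjacent vertices and staying inside $D$ must cross the hull. The objective is then to select an angularly separated family of cones whose cells stay pairwise distinguishable both within and across cones, so that the selected cells number $\Omega(n^4)$ rather than the $\Omega(n^4/m)$ of the current argument. The main obstacle — and the reason the bound has resisted improvement — is precisely the angularly \emph{adjacent} bounded-exterior cells, pairs for which every connecting line grazes past the hull. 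I expect the crux to be a clustering statement asserting that the cells confusable with a given exterior cell have bounded angular width, so that each ordering is realized by only $n^{o(1)}$ exterior cells; tracking how the radial ordering evolves as the viewpoint slides tangentially along one side of the hull, in the style of allowable sequences, seems the most promising route and is where a genuinely new idea will be needed.
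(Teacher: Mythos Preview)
The statement you are addressing is stated in the paper as an open \emph{conjecture}; the paper does not prove it, so there is no proof to compare against. What the paper does prove is the weaker $g(n)\ge\Omega(n^{3})$ under strong general position (Theorem~\ref{thm:g_lower}).

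Your chord criterion is correct and is a clean strengthening of Lemma~\ref{lem:convex_interior}: once $\ell$ meets the interior of the hull, the bipartition of $S$ by $\ell$ has both parts nonempty, and for any bichromatic pair the unique intersection of its supporting line with $\ell$ lies on the closed segment between the two points, hence on neither outer half-line; the Partition Lemma then applies verbatim. You are also right that this step uses only ordinary general position. The corollary that every cell meeting the hull interior carries its own radial ordering, distinct from that of every other cell, is valid.

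That said, your proposal is not a proof of the conjecture, and you say so yourself: the ``clustering statement'' you invoke at the end is precisely the missing content. Without it your outline does not improve on the paper's $\Omega(n^{3})$. The chord criterion distinguishes exterior cells whose connecting line cuts the hull, but this still leaves, for each hull vertex $v$, the cells of $\Gamma_v$ (and those of the adjacent cones) mutually unresolved by your new tool; inside each such angular cluster you fall back on exactly the cone argument of Theorem~\ref{thm:g_lower}, and pigeonholing over the $m$ cones again gives only $\Omega(n^{4}/m)$. Gaining the extra factor of $n$ requires controlling coincidences of orderings \emph{across} angularly nearby exterior cells, and nothing in the proposal does this. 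Note too that even your hoped-for bound of $n^{o(1)}$ cells per ordering would yield only $n^{4-o(1)}$ orderings, not $\Omega(n^{4})$.

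One further gap specific to the general-position formulation: your reduction relies on there being $\Omega(n^{4})$ cells in the order partition, and the paper's proof of that fact (Theorem~\ref{thm:part_size}) explicitly uses strong general position to control vertex degrees in the arrangement. Dropping to ordinary general position allows many lines through pairs of $S$ to concur off $S$, which lowers the vertex and face counts; you would need to re-establish the $\Omega(n^{4})$ cell count in that setting before the rest of your argument even gets started.
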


For colored point sets the situation is far more intriguing,
here we have been able to prove that such a gap exists.
Mainly that there are bi-colored sets of $2n$ points
with $\Theta(n^4)$ colored radial orderings and 
sets with  only $\Theta(n^2)$.
The best lower bound we have been able to provide is of $\Omega(n)$.
We make the following conjecture.

\begin{conjecture}
$\col{g}(n) = \Theta(n^2)$.
\end{conjecture}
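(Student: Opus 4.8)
Since the upper bound $\col{g}(n) \le O(n^2)$ is already supplied by Theorem~\ref{thm:gcol_upper}, the plan is to establish the matching lower bound $\col{g}(n) \ge \Omega(n^2)$: every set $S$ of $n$ red and $n$ blue points in strong general position (with $n$ even) should admit $\Omega(n^2)$ distinct colored radial orderings, thereby improving the bound $\col{g}(n)\ge n$ of Theorem~\ref{thm:gcol_lower} by a full factor of $n$. I would attack this by pushing the observation point to infinity and studying the colored radial orderings seen from far away. As the direction $\theta$ of a far-away observer rotates, the points of $S$ appear clustered in the direction opposite to $\theta$, so the colored radial ordering with respect to such an observer is the circular closure of the linear color sequence $L(\theta)$ obtained by sorting $S$ according to its projection onto the line perpendicular to $\theta$. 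Over a half-turn $\theta\in[0,\pi)$ each pair of points swaps exactly once in projection order, and once swapped it stays swapped; hence the number $J(\theta)$ of \emph{red--blue} pairs that have reversed relative to $L(0)$ increases monotonically from $0$ to $n^2$, jumping by one at each of the $n^2$ bichromatic swap events, each of which alters $L(\theta)$ by a single adjacent red--blue transposition.

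The heart of the argument is to convert these $n^2$ monotone events into $\Omega(n^2)$ \emph{distinct} colored orderings. The obstruction is that $J$ is measured against the fixed reference $L(0)$ and is \emph{not} a function of the rotation-invariant colored ordering itself: two events far apart in $\theta$ can return the circular color sequence to a cyclic rotation of an earlier one, collapsing several plateaus of $J$ into a single colored radial ordering. This collapse is not hypothetical, and it explains why the problem is delicate: the minimizing construction of Theorem~\ref{thm:gcol_upper} places points almost evenly on a circle with the symmetric ``red,blue,blue,red'' pattern, and it is precisely this near rotational symmetry that reduces the $\Theta(n^2)$ far-away events to only $O(n^2)$ distinct colored orderings. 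Thus the entire content of the lower bound is a \emph{symmetry-versus-genericity} estimate: I would try to bound the number of $J$-plateaus that can share one circular colored ordering by the number of cyclic rotations under which that ordering is (approximately) invariant, and then argue that strong general position, together with the balanced coloring and the parity of $n$, prevents a colored point set from exhibiting the repeated near-symmetry needed to push this count below $\Omega(n^2)$.

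I expect this symmetry estimate to be the main obstacle, and indeed the reason the matching bound has so far resisted proof; the crux is quantifying how strong general position caps the rotational coincidences that an adversary would need to minimize $\col{\rho}(S)$. As a complementary formulation to keep in reserve, I would recast the problem arrangement-theoretically: by the same reasoning as in the Partition Lemma (Lemma~\ref{lem:sep}), the colored radial ordering is constant exactly on the cells of the arrangement of the $n^2$ bichromatic lines, each spanned by a red and a blue point, so $\col{\rho}(S)$ is the number of \emph{distinct} colored orderings realized among these cells. Here the bound $O(n^4)$ is immediate, and the task becomes a lower bound of $\Omega(n^2)$ on the number of distinct orderings among bichromatic cells. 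The tempting shortcut of dividing the $\Omega(n^4)$ order-partition cells of Theorem~\ref{thm:part_size} by the number of non-colored refinements of a fixed colored ordering fails outright, since a single colored ordering can in principle be refined by super-polynomially many non-colored orderings; consequently the geometry of the bichromatic arrangement, rather than a purely combinatorial division, must carry the argument, and I anticipate that tying this geometry to the near-symmetry of extremal configurations is exactly where the real work lies.
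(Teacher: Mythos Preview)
This statement is a \emph{conjecture}: the paper does not prove it. In Section~\ref{sec:con} the authors record $\col{g}(n)=\Theta(n^2)$ as an open problem, having established only $\col{g}(n)\le O(n^2)$ (Theorem~\ref{thm:gcol_upper}) and $\col{g}(n)\ge n$ (Theorem~\ref{thm:gcol_lower}). There is therefore no proof in the paper to compare your proposal against.

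Your proposal is a research outline, not a proof, and you are candid about this: you correctly isolate the obstruction---the $n^2$ monotone bichromatic swap events seen from infinity yield $n^2$ distinct \emph{linear} colored sequences, but distinct linear sequences can collapse to the same \emph{cyclic} colored sequence, and it is exactly this cyclic-symmetry collapse that the extremal construction of Theorem~\ref{thm:gcol_upper} exploits---yet you do not resolve it. The ``symmetry-versus-genericity'' estimate you describe is precisely the missing lower-bound ingredient, and nothing in the paper supplies it either; that is why the bound remains conjectural. One minor technical correction: the colored radial ordering is constant on the cells cut out by the bichromatic \emph{half-lines} (the order partition restricted to red--blue pairs), not by the full bichromatic lines, since crossing the open segment between a red and a blue point does not transpose them in the circular order. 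This does not change the $\Theta(n^4)$ cell count you invoke, but the distinction matters if you pursue the arrangement-theoretic reformulation.
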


Note that we used the assumption that  $n$ is even 
heavily in the proof of Theorem \ref{thm:gcol_upper}.
In fact, it can be shown that the number of colored radial
orderings may increase to $\Theta(n^3)$ if a red and a blue
point are added to the construction in the proof
of Theorem \ref{thm:gcol_upper}. Also in
the proof of Theorem \ref{thm:gcol_lower} we
relied on the fact that the number of red points
equals the number of blue points. It is possible
to construct a set of $n$ red and $n-1$ blue points
such that a walk like the one described in Theorem
\ref{thm:gcol_lower} yields only one colored
radial ordering. It may be the case that the bounds
given in Theorems \ref{thm:gcol_upper} and
\ref{thm:gcol_lower} no longer hold when either one of these
two hypothesis is dropped.
\\

{\bf Acknowledgments.}
The problems studied here, were introduced and partially solved during a stay at Universidad
de la Habana, Cuba, January 2010.
The authors would like to thank our host Carlos Ochoa during
this visit, and Clemens Huemer; Merce Claverol and David Wood  for helpful comments.

\bibliographystyle{plain}
\bibliography{orderingsbib}

\begin{thebibliography}{10}

\bibitem{bernardo}
B.~M. {\'A}brego and S.~Fern{\'a}ndez-Merchant.
\newblock A lower bound for the rectilinear crossing number.
\newblock {\em Graphs Combin.}, 21(3):293--300, 2005.

\bibitem{gelasio}
J.~Balogh and G.~Salazar.
\newblock {$k$}-sets, convex quadrilaterals, and the rectilinear crossing
  number of {$K_n$}.
\newblock {\em Discrete \& Comput. Geom.}, 35(4):671--690, 2006.

\bibitem{4-fan}
I.~B\'{a}r\'{a}ny and I.~Matou{\v{s}}ek.
\newblock Equipartition of two measures by a 4-fan.
\newblock {\em Discrete \& Comput. Geom.}, 27:293--301, 2002.

\bibitem{k-fans}
I.~B\'{a}r\'{a}ny and J.~Matou{\v{s}}ek.
\newblock Simultaneous partitions of measures by {$k$}-fans.
\newblock {\em Discrete \& Comput. Geom.}, 25:317--334, 2001.

\bibitem{bereg05}
S.~Bereg.
\newblock Equipartitions of measures by 2-fans.
\newblock {\em Discrete \& Comput. Geom.}, 34:87--96, 2005.

\bibitem{bereg00}
S.~Bespamyatnikh, D.~Kirkpatrick, and J.~Snoeyink.
\newblock Generalizing ham-sandwich cuts to equitable subdivisions.
\newblock {\em Discrete \& Comput. Geom.}, 24:605--622, 2000.

\bibitem{poly}
L.~Deneen and G.~Shute.
\newblock Polygonizations of point sets in the plane.
\newblock {\em Discrete \& Comput. Geom.}, 3(1):77--87, 1988.

\bibitem{ferran}
O.~Devillers, V.~Dujmovi{\'c}, H.~Everett, S.~Hornus, S.~Whitesides, and
  S.~Wismath.
\newblock Maintaining visibility information of planar point sets with a moving
  viewpoint.
\newblock {\em Internat. J. Comput. Geom. Appl.}, 17(4):297--304, 2007.

\bibitem{ECG2011}
J.~M. D{\'i}az-Ba{\~n}ez, R~Fabila-Monroy, and P.~P{\'e}rez-Lantero.
\newblock On the number of radial orderings of colored planar point sets.
\newblock In {\em Proc. XIV Spanish Meeting on Computational Geometry},
  Alcal{\'a} de Henares, Spain, 2011.

\bibitem{counting}
B.~Gfeller, M.~Mihal\'{a}k, S.~Suri, E.~Vicari, and P.~Widmayer.
\newblock Counting targets with mobile sensors in an unknown environment.
\newblock In {\em Proceedings of the 3rd international conference on
  Algorithmic aspects of wireless sensor networks}, ALGOSENSORS'07, pages
  32--45, Berlin, Heidelberg, 2008. Springer-Verlag.

\bibitem{minimalist}
O.~Kannianen and T.~M.~R. Allho.
\newblock Minimalist navigation for a mobile robot based on a simple visibility
  sensor information.
\newblock In {\em AI and Machine Consciousness, Proceedings of the 13th Finnish
  Artificial Intelligence Conference STeP 2008}, pages 68--75, 2008.

\bibitem{optimization}
S.~Langerman and Steiger W.
\newblock Optimization in arrangements.
\newblock In {\em In Proc. 20th Sympos. Theoret. Aspects Comput. Sci.}, Lect.
  Notes in Comput. Sci., vol. 2607, pages 50--61, Berlin, Heidelberg, 2003.
  Springer-Verlag.

\bibitem{lovasz}
L.~Lov{\'a}sz, K.~Vesztergombi, Uli Wagner, and Emo Welzl.
\newblock Convex quadrilaterals and {$k$}-sets.
\newblock In {\em Towards a theory of geometric graphs}, volume 342 of {\em
  Contemp. Math.}, pages 139--148. Amer. Math. Soc., Providence, RI, 2004.

\bibitem{simple}
S.~Suri, E.~Vicari, and P.~Widmayer.
\newblock Simple robots with minimal sensing: From local visibility to global
  geometry.
\newblock {\em Int. J. Rob. Res.}, 27:1055--1067, September 2008.

\bibitem{robots_original}
B.~Tovar, L.~Freda, and S.~M. LaValle.
\newblock Using a robot to learn geometric information from permutations of
  landmarks.
\newblock In {\em Topology and robotics}, volume 438 of {\em Contemp. Math.},
  pages 33--45. Amer. Math. Soc., Providence, RI, 2007.

\bibitem{numberof}
R.~Ulber.
\newblock On the number of star-shaped polygons and polyhedra.
\newblock In {\em Proc.\ 11th Canadian Conference\ on Computational Geometry
  CCCG'99}, Vancouver, Canada, 1999.

\end{thebibliography}

\end{document}